\newtheorem{prop}{Proposition}
\newcommand{\bm}{\boldsymbol}
\newcommand{\vect}[1]{\mathbf{#1}}
\newcommand{\mat}[1]{\mathbf{#1}}
\newcommand{\E}[1]{{\mathcal{E}}\left\{#1\right\}}
\newcommand{\ex}[1]{\exp \left\{#1\right\}}
\newcommand{\Tr}[1]{{\mathrm{Tr}}\{#1\}}
\newcommand{\Real}[1]{{\mathrm{Re}}\{#1\}}
\newcommand{\Prob}[1]{{\mathrm{Pr}}\left[#1\right]}
\renewcommand{\det}[1]{|#1|}
\newcommand{\tp}{t_{p}}
\newcommand{\ts}{t_{s}}
\newcommand{\Tp}{T_{p}}
\newcommand{\Ts}{T_{s}}
\newcommand{\Qt}{Q_{t}}
\newcommand{\Q}{Q}
\newcommand{\Qtp}{Q_{\tp}}
\newcommand{\Qts}{Q_{\ts}}
\newcommand{\phitrue}{\phi_{0}}
\newcommand{\va}{\vect{a}}
\newcommand{\atrue}{\va_{0}}
\newcommand{\m}{\vect{m}}
\newcommand{\n}{\vect{n}}
\newcommand{\mt}{\m_{\tp}}
\newcommand{\npt}{\vect{n}_{\tp}}
\newcommand{\nst}{\vect{n}_{\ts}}
\newcommand{\vu}{\vect{u}}
\newcommand{\upt}{\vu_{\tp}}
\newcommand{\vpt}{\vect{v}_{\tp}}
\newcommand{\ust}{\vu_{\ts}}
\newcommand{\x}{\vect{x}}
\newcommand{\xp}{\vect{x}}
\newcommand{\xs}{\vect{y}}
\newcommand{\xpt}{\vect{x}_{\tp}}
\newcommand{\xst}{\vect{y}_{\ts}}
\newcommand{\z}{\vect{z}}
\newcommand{\zpt}{\z_{\tp}}
\newcommand{\zst}{\z_{\ts}}
\newcommand{\wpt}{\vect{w}_{\tp}}
\newcommand{\A}{\mat{A}}
\newcommand{\F}{\mat{F}}
\newcommand{\Fp}{\mat{F}_{x}}
\newcommand{\Fs}{\mat{F}_{y}}
\newcommand{\I}{\mat{I}}
\newcommand{\Mp}{\mat{M}}
\newcommand{\R}{\mat{R}}
\newcommand{\Rhat}{\hat{\R}}
\newcommand{\Rml}{\R_{\tiny \text{ML}}}
\newcommand{\Raml}{\R_{\tiny \text{AML}}}
\newcommand{\Rracg}{\R_{\tiny \text{RACG}}}
\newcommand{\iR}{\R^{-1}}
\newcommand{\isqrtR}{\R^{-1/2}}
\newcommand{\Rj}{\R_{j}}
\newcommand{\Rk}{\R_{k}}
\newcommand{\Rnj}{\R^n_{j}}
\newcommand{\Rnk}{\R^n_{k}}
\newcommand{\Porthaw}{\mat{P}^{\perp}_{\isqrtR\va(\phi)}}
\newcommand{\Porthawtrue}{\mat{P}^{\perp}_{\isqrtR\atrue}}
\newcommand{\Xp}{\mat{X}}
\newcommand{\Xs}{\mat{Y}}
\newcommand{\vphi}{\bm{\phi}}
\newcommand{\vtheta}{\bm{\theta}}
\newcommand{\vthetas}{\vtheta^{s}}
\newcommand{\vthetan}{\vtheta^{n}}
\newcommand{\vzeta}{\bm{\zeta}}
\newcommand{\mGamma}{\bm{\Gamma}}
\newcommand{\vCN}[2]{\mathbb{C}\mathcal{N}\left(#1,#2\right)}
\newcommand{\mCN}[3]{\mathbb{C}\mathcal{N}\left(#1,#2,#3\right)}
\newcommand{\Cchisquare}[1]{\mathbb{C}\chi^{2}_{#1}}
\newcommand{\pdfGamma}[2]{\mathcal{G}\left({#1,#2}\right)}
\newcommand{\dmutdjH}{\frac{\partial \mt^{H}}{\partial \theta_j}}
\newcommand{\dmutdsjH}{\frac{\partial \mt^{H}}{\partial \theta^s_j}}
\newcommand{\dmutdk}{\frac{\partial \mt}{\partial \theta_k}}
\newcommand{\dmutdsk}{\frac{\partial \mt}{\partial \theta^s_k}}
\newcommand{\thetaj}{\theta_{j}}
\newcommand{\thetak}{\theta_{k}}
\newcommand{\thetanj}{\theta^{n}_{j}}
\newcommand{\dist}{\overset{d}{=}}
\begin{document}
\title{Bounds for maximum likelihood regular and non-regular DoA estimation in $K$-distributed noise}
\author{Yuri I. Abramovich\thanks{Y. Abramovich is with W R Systems, Ltd., 11351 Random Hills Road, Suite 400, Fairfax, VA 22030, USA. e-mail: yabramovich@wrsystems.com}, Olivier Besson\thanks{O. Besson is with the University of Toulouse, ISAE-Supaero, Department Electronics Optronics Signal, 10 Avenue Edouard Belin, 31055 Toulouse, France. e-mail: olivier.besson@isae-supaero.fr} and Ben A. Johnson\thanks{B. Johnson is with the University of South Australia - ITR, Mawson Lakes SA 5085, Australia. e-mail: ben.a.johnson@ieee.org}}
\date{}
\maketitle

\begin{abstract}
We consider the problem of estimating the direction of arrival of a signal embedded in $K$-distributed noise, when secondary data which contains noise only are assumed to be available. Based upon a recent formula of the Fisher information matrix (FIM) for complex elliptically distributed data, we provide a simple expression of the FIM with the two data sets framework. In the specific case of $K$-distributed noise, we show that, under certain conditions, the FIM for the deterministic part of the model can be unbounded, while the FIM for the covariance part of the model is always bounded. In the general case of elliptical distributions, we provide a sufficient condition for unboundedness of the  FIM. Accurate approximations of the FIM for $K$-distributed noise are also derived when it is bounded. Additionally, the maximum likelihood estimator of the signal DoA and an approximated version are derived, assuming known covariance matrix: the latter is then estimated from secondary data using a conventional regularization technique. When the FIM is unbounded, an analysis of the estimators reveals a rate of convergence much faster than the usual $T^{-1}$.  Simulations illustrate the different behaviors of the estimators, depending on the FIM being bounded or not.
\end{abstract}
\newpage

\section{Problem statement}
Estimating the direction of arrival (DoA) of multiple signals impinging on an array of sensors from observation of a finite number of array snapshots has been extensively studied in the literature \cite{VanTrees02}. Maximum likelihood estimators (MLE) and Cram\'{e}r-Rao bounds (CRB), derived under the assumption of additive white Gaussian noise, and either for the so-called conditional or unconditional model \cite{Bohme86,Stoica89,Stoica90,Stoica90b}, serve as references to which newly developed DoA estimators have been systematically compared.  In many instances however, additive noise is usually colored and, consequently, the problem of DoA estimation in spatially correlated noise fields has been studied, see e.g., \cite{Friedlander95,Ye96,Nagesha96,Viberg97,Goransson99}.  

When the spatial covariance matrix of this additive noise is known a priori, maximum likelihood estimators and Cram\'{e}r-Rao bounds are changing in a straightforward way with whitening operations. The new statistical problem appears when the covariance matrix of the additive noise is not known a priori and information about this matrix is substituted by a number of independent and identically distributed (i.i.d.) training samples, that form the so-called secondary training sample data set. In many cases one can assume that the statistical properties of the training noise data are the same as per noise data within the primary training set data: such conditions are usually referred to as the supervised training conditions. Therefore, under these conditions, one has two sets of measurements, one primary set $\Xp \in \mathbb{C}^{M \times \Tp}$ which contains signals of interest (SOI) and noise, and a second set $\Xs \in \mathbb{C}^{M \times \Ts}$ (secondary training set) which contains noise only. Examples of this problem formulation are numerous in the area of passive location and direction finding. For instance, in the so-called over-sampled 2D HF antenna arrays, ionospherically propagated external noise is spatially non white \cite{Coleman00,Abramovich13b}, and some parts of HF spectrum (distress signals for example) with no signals may be used for external noise sampling  \cite{Abramovich14}.  Despite its relevance in many practical situations, this problem has been relatively scarcely studied \cite{Abramovich04,Werner06}. For parametric description of the Gaussian noise covariance matrix with $\vthetan$  the unknown parameter vector, in \cite{Werner06}, the authors derive the Cram\'{e}r-Rao bound for joint  SOI parameters (DoA) $\vthetas$ and noise parameters $\vthetan$ estimation, assuming a conventional unconditional model, i.e., $\Xp \sim \mCN{\mat{0}}{\A(\vphi) \mGamma \A(\vphi)^{H} + \R_{n}}{\I_{\Tp}}$ and $\Xs \sim \mCN{\mat{0}}{\R_{n}}{\I_{\Ts}}$ where $\mCN{.}{.}{.}$ stands for the complex Gaussian distribution whose respective parameters are the mean, row covariance matrix and column covariance matrix. $\A(\vphi)$ is the usual steering matrix with $\vphi$  the vector of unknowns DoA,  $\mGamma$ denotes the waveforms covariance matrix and $\R_{n}$ corresponds to the noise covariance matrix, which is parameterized by vector $\vthetan$.

In many cases however, the Gaussian assumption for the predominant part of the noise cannot be advocated. Typical example is the HF external noise, heavily dominated by powerful lighting strikes \cite{George82,CCIR83,Radio13}. Evidence of deviations from the Gaussian assumption has been demonstrated numerous times for different applications, with  the relevance of the compound-Gaussian (CG) models being justified \cite{Conte87,Baker91,Rangaswamy93,Billingsley99,Conte04,Conte05}. In essence, the individual $M$-variate snapshot of such a noise over the face of an antenna array may be treated as a Gaussian random vector, whose power can randomly fluctuate from sample to sample. CG models belong to a larger class of distributions, namely multivariate elliptically contoured distributions (ECD) \cite{Anderson90,Fang90,Ollila12}.  For the sake of clarity, we briefly review the main definitions of ECD. A vector $\x \in \mathbb{C}^{M}$  follows an EC distribution if it admits the following stochastic representation
\begin{equation}\label{storep_CES}
\x \dist \m + \sqrt{\Q} \A \vu
\end{equation}
where $\dist$ means ``has the same distribution as''. In \eqref{storep_CES}, $\Q$  is a  non-negative real random variable  and is independent of the complex random vector $\vu$ which is  uniformly distributed over the complex sphere $\mathbb{C}S^{M} = \left\{ \vu \in \mathbb{C}^{M}; \left\| \vu \right\|=1 \right\}$. The matrix $\A$ is such that $\A \A^{H} = \R$ where $\R$ is the so-called scatter matrix, and we assume here that $\R$ is non-singular. The probability density function (p.d.f.) of $\x$ can then be written as
\begin{equation}\label{p(x)}
p(\x | \m, \R , g) \propto \det{\R}^{-1} g \left( (\x-\m)^{H} \iR (\x-\m) \right)
\end{equation}
where $\propto$ stands for proportional to.  The function $g: \, \mathbb{R}^{+} \, \longrightarrow \, \mathbb{R}^{+}$ is called the density generator and satisfies finite moment condition $\delta_{M,g} = \int_{0}^{\infty} t^{M-1} g(t) dt < \infty$. It is related to the p.d.f. of the modular variate $\Q$ by $p(\Q)= \delta_{M,g}^{-1} \Q^{M-1} g(\Q)$.

Going back to our scenario of two data sets $\Xp=\begin{bmatrix} \xp_{t_{1}} & \ldots & \xp_{\Tp} \end{bmatrix}$ and $ \Xs= \begin{bmatrix} \xs_{t_{1}} & \ldots & \xs_{\Ts} \end{bmatrix}$, we assume that they are independent, and that their columns are independent and distributed (i.i.d.) according to \eqref{p(x)}. In other words, one has $\xpt \dist \mt + \sqrt{\Qtp} \A \upt$ and $\xst \dist \sqrt{\Qts} \A \ust$, where $\Qtp$ and $\Qts$ are i.i.d. variables drawn from $p(\Q) \propto \Q^{M-1} g(\Q)$, and $\upt$ and $\ust$ are i.i.d. random vectors uniformly distributed on the unit sphere.   It then follows that the joint distribution of  $\Xp, \Xs$ is given by $p(\Xp , \Xs | \Mp, \R, g) = p(\Xp | \Mp, \R, g) p(\Xs | \R, g)$ where $\Mp = \begin{bmatrix} \m(1) & \ldots & \m(\Tp) \end{bmatrix}^{T}$ and
\begin{subequations}\label{p(Xp,Xs)}
\begin{align}
p(\Xp | \Mp, \R, g) &\propto \det{\R}^{-\Tp} \prod_{\tp=1}^{\Tp} g \left( \zpt^{H} \iR \zpt \right)  \\
p(\Xs | \R, g) &\propto \det{\R}^{-\Ts} \prod_{\ts=1}^{\Ts} g \left( \xst^{H} \iR \xst \right)
\end{align}
\end{subequations}
where $\zpt = \xpt-\mt$. Additionally, we assume that $\Mp$ depends on a parameter vector $\vthetas$ while $\R$ depends on $\vthetan$. Our objective is then to estimate $\vtheta = \begin{bmatrix} \vthetas \\ \vthetan \end{bmatrix}$ from $(\Xp , \Xs)$. Let us emphasize an essential difference of the problem in \eqref{p(Xp,Xs)} with respect to the typical problem of target detection in CG clutter \cite{Sangston12}. There, within each range resolution cell the clutter is perfectly Gaussian and therefore the optimum space-time processing is the same as per the standard Gaussian problem formulation. It is the data dependent threshold and clutter covariance matrix (in adaptive formulation) that needs to be calculated from the secondary data, if not known a priori \cite{Sangston12,Pascal08}. In the problem \eqref{p(Xp,Xs)}, the SOI DoA estimation should be performed on a number $\Tp$ of ECD i.i.d. primary training samples, and maximum likelihood DoA estimation algorithm and CRB should be expected to be very different from the Gaussian case. 

The paper is organized in the following way. In Section \ref{section:CRB}, we derive a general expression of the FIM for elliptically distributed noise using two data sets. Section \ref{section:doaK} focuses on the case of DoA estimation in $K$-distributed noise. In section \ref{section:crbK}, we derive conditions under which the FIM is bounded/unbounded, and provide a sufficient condition for unboundedness of the FIM with general elliptical distribution. The maximum likelihood estimate, as well as an approximation, are derived in section \ref{section:mleK}. In the same section, we derive  lower and  upper  bounds on the mean-square error of the MLE for non-regular estimation conditions, i.e., when the Fisher information matrix is unbounded. Numerical simulations serve to evaluate the performance of the estimators in Section \ref{section:numerical} and our conclusions are drawn in Section \ref{section:conclu}.

\section{Cram\'{e}r-Rao bounds\label{section:CRB}}
In this section, we derive the CRB for estimation of parameter vector $\vtheta$ from the distribution in \eqref{p(Xp,Xs)}. The Fisher information matrix (FIM) for the problem at hand can be written as \cite{VanTrees02}
\begin{align}\label{FIM}
\F(j,k)& = \E{\frac{\partial \log p(\Xp , \Xs | \Mp, \R, g)}{\partial \thetaj} \frac{\partial \log p(\Xp , \Xs | \Mp, \R, g)}{\partial \thetak}}  \nonumber \\
&=  \E{\frac{\partial \log p(\Xp  | \Mp, \R, g)}{\partial \thetaj} \frac{\partial \log p(\Xp | \Mp, \R, g)}{\partial \thetak}} \nonumber \\
& + \E{\frac{\partial \log p( \Xs | \R, g)}{\partial \thetaj} \frac{\partial \log p( \Xs |  \R, g)}{\partial \thetak}} \nonumber \\
\end{align}
where we used the fact that
\begin{equation}
\E{\frac{\partial \log p(\Xp  | \Mp, \R, g)}{\partial \thetaj} \frac{\partial \log p(\Xs | \R, g)}{\partial \thetak}} = 0.
\end{equation}
Hence, the total FIM is the sum of two matrices $\F = \Fp + \Fs$, with straightforward definition from \eqref{FIM}. In order to derive each matrix, we   will make use of the  general expression of the Fisher information matrix for ECD recently derived in \cite{Besson13,Greco13}. First, let us introduce
\begin{equation}
\alpha_{\mu} =  \E{\Qt^{\mu} \phi^{2}(\Qt)}
\end{equation}
where $\phi(t)=-\frac{g'(t)}{g(t)}$. Then, we have from \cite{Besson13} that the $(j,k)$-th element of the Fisher information matrices is given by
\begin{align}\label{F1}
\Fp(j,k)& = \frac{2 \alpha_{1}}{M} \sum_{\tp=1}^{\Tp}  \Real{\dmutdjH \iR \dmutdk} \nonumber \\
&+ \Tp \left[ \frac{\alpha_{2}}{M(M+1)} - 1 \right] \Tr{\iR \Rj} \Tr{\iR \Rk} \nonumber  \\
& + \frac{\alpha_{2} \Tp}{M(M+1)} \Tr{\iR \Rj \iR \Rk}
\end{align}
\begin{align}\label{F2}
\Fs(j,k) &= \Ts \left[ \frac{\alpha_{2}}{M(M+1)} - 1 \right] \Tr{\iR \Rj} \Tr{\iR \Rk} \nonumber  \\
& + \frac{\alpha_{2} \Ts}{M(M+1)} \Tr{\iR \Rj \iR \Rk}
\end{align}
where $\Rj  = \frac{\partial \R}{\partial \thetaj}$.  Since $\R$ depends only on $\vthetan$, it follows that $\Fs$ takes the following form
\begin{equation}
\Fs = \begin{bmatrix} \mat{0} & \mat{0} \\ \mat{0} & \Fs^{nn} \end{bmatrix}
\end{equation}
with 
\begin{align}
\Fs^{nn}(j,k) &=  \Ts \left[ \frac{\alpha_{2}}{M(M+1)} - 1 \right] \Tr{\iR \Rnj} \Tr{\iR \Rnj} \nonumber \\
&+ \frac{\alpha_{2} \Ts}{M(M+1)} \Tr{\iR \Rnj \iR \Rnk}
\end{align}
where $\Rnj  = \frac{\partial \R}{\partial \thetanj}$. Let us now consider $\Fp$. Using the fact that $\R$ depends only on $\vthetan$ and $\mt$ depends only on $\vthetas$, $\Fp$ is block-diagonal, i.e.,
\begin{equation}
\Fp = \begin{bmatrix} \Fp^{ss} & \mat{0} \\ \mat{0} & \Fp^{nn} \end{bmatrix}
\end{equation}
with
\begin{subequations}
\begin{align}
\Fp^{ss}(j,k) &= \frac{2 \alpha_{1}}{M} \sum_{\tp=1}^{\Tp}  \Real{\dmutdsjH \iR \dmutdsk}  \\
\Fp^{nn}(j,k) &= \Tp \left[ \frac{\alpha_{2}}{M(M+1)} - 1 \right] \Tr{\iR \Rnj} \Tr{\iR \Rnj} \nonumber \\
&+ \frac{\alpha_{2} \Tp}{M(M+1)} \Tr{\iR \Rnj \iR \Rnk}.
\end{align}
\end{subequations}
The whole FIM is thus given by
\begin{equation}
\F = \begin{bmatrix} \Fp^{ss} & \mat{0} \\ \mat{0} & \Fp^{nn} + \Fs^{nn} \end{bmatrix}
\end{equation}
The CRB for estimation of $\vthetas$ is obtained as the upper-left block of the inverse of the FIM and is thus simply $CRB(\vthetas) = \left( \Fp^{ss} \right)^{-1}$. Similarly to the Gaussian case, the CRB for estimation of $\vthetas$ in the conditional model is the same as if $\R$ was known. As for the CRB for estimation of $\vthetan$, it is the same as if we had a set of $T=\Tp + \Ts$ noise only samples.

\section{Application to $K$-distributed noise \label{section:doaK}}
\subsection{Data model}
We address the specific problem where the primary data can be written as
\begin{equation}\label{xt_CGK}
\xpt = \mt + \sqrt{\tau_{\tp}} \npt
\end{equation}
where $\tau_{\tp}$ follows a Gamma distribution with shape parameter $\nu$ and scale parameter $\beta$, i.e., its p.d.f. is given by
\begin{equation}
p(\tau_{\tp}) = \frac{\beta^{-\nu}}{\Gamma(\nu)} \tau_{\tp}^{\nu-1} e^{-\beta^{-1} \tau_{\tp}}.
\end{equation}
which we denote as $\tau_{t} \sim \pdfGamma{\nu}{\beta}$, and $\npt \sim \vCN{\vect{0}}{\R}$. The noise component is known to follow a $K$ distribution and $\xpt$ in \eqref{xt_CGK} admits a CES representation similar to \eqref{storep_CES} with $\Qtp \dist \pdfGamma{\nu}{\beta} \times \Cchisquare{M}$. The p.d.f. of $\Qtp$ in this case is given by
\begin{equation}\label{p(Qt)}
p(\Qtp) = \frac{2 \beta^{-(\nu+M)/2}}{\Gamma(\nu) \Gamma(M)} \Qtp^{\frac{\nu+M}{2}-1} K_{M-\nu}\left(2 \sqrt{\Qtp / \beta}\right)
\end{equation}
where $K_{M-\nu}(.)$ is the modified Bessel function. Note that the $\mu$-th order moment of $ \Qtp$ is 
\begin{align}\label{E{Q^mu}}
&\E{\Qtp^{\mu}} = \frac{2 \beta^{-(\nu+M)/2}}{\Gamma(\nu) \Gamma(M)} \int_{0}^{\infty} \Qtp^{\mu + \frac{\nu+M}{2}-1} K_{M-\nu}\left(2 \sqrt{\Qtp / \beta}\right) d \Qtp\nonumber \\
&= \frac{\beta^{\mu}}{2^{2\mu+\nu+M-2}\Gamma(\nu) \Gamma(M)} \int_{0}^{\infty} z^{2\mu+\nu+M-1} K_{M-\nu}\left(z\right) dz \nonumber \\
&=  \beta^{\mu} \frac{\Gamma(\mu+\nu) \Gamma(\mu+M)}{\Gamma(\nu) \Gamma(M)} \quad (\mu + \min(\nu,M) > 0)
\end{align}
where we used the fact \cite[6.656.16]{Gradshteyn94} that
\begin{equation}
\int_{0}^{\infty} z^{\mu} K_{\nu}(z) dz = \begin{cases} 2^{\mu-1} \Gamma\left(\frac{\mu+\nu+1}{2}\right) \Gamma\left(\frac{\mu-\nu+1}{2}\right) & \mu+1 \pm \nu > 0 \\
\infty & \text{otherwise} \end{cases}.
\end{equation}
The density generator is thus here
\begin{equation}\label{g_K}
g(\Q) =  \Q^{\frac{\nu-M}{2}} K_{M-\nu}\left(2 \sqrt{\Q / \beta}\right)
\end{equation}
where, for the sake of notational convenience, we have dropped the subscript $_{\tp}$. 

\subsection{Cram\'{e}r-Rao bounds\label{section:crbK}}
The FIM for $K$-distributed noise can be obtained from the FIM for Gaussian distributed noise and the calculation of the scalar
\begin{equation}\label{alpha_mu_ini}
\alpha_{\mu} = \E{\Q^{\mu} \left[ \frac{g'(\Q)}{g(\Q)} \right]^{2}}
\end{equation}
for $\mu \in \left\{1,2\right\}$. For the signal parameters part only, we indeed have $\F_{K}^{ss} = M^{-1} \alpha_{1} \F_{G}^{ss}$ where the subscript $_K$ and $_G$ stand for $K$-distributed and Gaussian distributed noise. Using the fact that $K'_{a}(z) = \frac{a}{z}K_{a}(z)  - K_{a+1}(z)$, it follows that
\begin{align}\label{g'(q)}
g'(\Q) &=  \frac{\nu-M}{2} \Q^{\frac{\nu-M}{2}-1} K_{M-\nu}\left(2 \sqrt{\Q / \beta}\right) + \Q^{\frac{\nu-M-1}{2}} \beta^{-1/2}  K'_{M-\nu}\left(2 \sqrt{\Q / \beta}\right) \nonumber \\
&= \frac{\nu-M}{2} \Q^{\frac{\nu-M}{2}-1} K_{M-\nu}\left(2 \sqrt{\Q / \beta}\right) \nonumber \\
&+ \Q^{\frac{\nu-M-1}{2}} \beta^{-1/2} \left[ \frac{M-\nu}{2 \sqrt{\Q / \beta}} K_{M-\nu}\left(2 \sqrt{\Q / \beta}\right) - K_{M+1-\nu}\left(2 \sqrt{\Q / \beta}\right) \right] \nonumber \\
&= - \Q^{\frac{\nu-M-1}{2}} \beta^{-1/2} K_{M+1-\nu}\left(2 \sqrt{\Q / \beta}\right).
\end{align}
It then ensues that
\begin{equation}\label{g'(q)/g(q)}
\phi(\Q) = -\frac{g'(\Q)}{g(\Q)} = \Q^{-1/2} \beta^{-1/2} \frac{K_{M+1-\nu}\left(2 \sqrt{\Q / \beta}\right)}{K_{M-\nu}\left(2 \sqrt{\Q / \beta}\right)}
\end{equation}
and thus
\begin{align}\label{alpha_mu}
&\alpha_{\mu} = \beta^{-1} \E{\Q^{\mu-1} \left[ \frac{K_{M+1-\nu}\left(2 \sqrt{\Q / \beta}\right)}{K_{M-\nu}\left(2 \sqrt{\Q / \beta}\right)} \right]^{2} } \nonumber \\
&=  \frac{2 \beta^{-\frac{\nu+M}{2}-1}}{\Gamma(\nu) \Gamma(M)} \int_{0}^{\infty} \Q^{\mu-2+ \frac{\nu+M}{2}} \frac{K^{2}_{M+1-\nu}\left(2 \sqrt{\Q / \beta}\right)}{K_{M-\nu}\left(2 \sqrt{\Q / \beta}\right)} d \Q \nonumber \\
&= \frac{\beta^{\mu-2}}{2^{2\mu+\nu+M-4}\Gamma(\nu) \Gamma(M)} \int_{0}^{\infty} z^{2\mu+\nu+M-3} \frac{K^{2}_{M+1-\nu}\left(z\right)}{K_{M-\nu}\left(z\right)} d z.
\end{align}
A formula for the FIM in case of $K$-distributed noise was derived in \cite{ElKorso14} based on the compound Gaussian representation \eqref{xt_CGK}. While it resembles our derivations based on the FIM for ECD derived in \cite{Besson13}, it does not match exactly our expression herein. Moreover,  we study herein the \emph{existence  of the FIM} and derive \emph{a closed-form approximation of the FIM}.

Let us investigate the conditions under which the integral
\begin{equation}\label{I_mu}
I_{\mu} = \int_{0}^{\infty} z^{2\mu+\nu+M-3} \frac{K^{2}_{M+1-\nu}\left(z\right)}{K_{M-\nu}\left(z\right)} d z
\end{equation}
converges. Towards this end, let us use the following inequality which holds for $M+1-\nu>1$  and $z>0$ \cite{Baricz10}
\begin{align}\label{lowerbound_besselk}
\frac{K_{M+1-\nu}(z)}{K_{M-\nu}(z)} &> \frac{(M+1-\nu) + \sqrt{\frac{M+1-\nu}{M-\nu}z^{2} + (M+1-\nu)^{2}}}{\frac{M+1-\nu}{M-\nu}z} \nonumber \\
& > \frac{(M+1-\nu) + \sqrt{\frac{M+1-\nu}{M-\nu}}z}{\frac{M+1-\nu}{M-\nu}z} \nonumber \\
&= \left(\frac{M-\nu}{M+1-\nu}\right)^{1/2} + (M-\nu) z^{-1}.
\end{align}
It follows that
\begin{align}
I_{\mu} & > \left(\frac{M-\nu}{M+1-\nu}\right)^{1/2} \int_{0}^{\infty} z^{2\mu+\nu+M-3} K_{M+1-\nu}\left(z\right) d z  \nonumber \\
&+ (M-\nu) \int_{0}^{\infty} z^{2\mu+\nu+M-4} K_{M+1-\nu}\left(z\right) d z
\end{align}
The first integral converges for $2\mu+\nu+M-2-M-1+\nu > 0 \Leftrightarrow \mu+\nu-\frac{3}{2}>0$ while the second converges for $2\mu+\nu+M-3-M-1+\nu > 0 \Leftrightarrow \mu+\nu-2>0$. Hence, for $\mu+\nu-2>0$, one has
\begin{align}
I_{\mu} &> \left(\frac{M-\nu}{M+1-\nu}\right)^{1/2} 2^{2\mu+\nu+M-4} \Gamma(\mu+M-\frac{1}{2}) \Gamma(\mu+\nu-\frac{3}{2}) \nonumber \\
&+ (M-\nu) 2^{2\mu+\nu+M-5} \Gamma(\mu+M-1) \Gamma(\mu+\nu-2).
\end{align}
Accordingly, one has, for $z>0$
\begin{align}\label{upperbound_besselk}
\frac{K_{M+1-\nu}(z)}{K_{M-\nu}(z)}  &< \frac{(M+1-\nu) + \sqrt{z^{2} + (M+1-\nu)^{2}}}{z}  \nonumber \\
&< 2(M+1-\nu) z^{-1} + 1
\end{align}
which implies that
\begin{align}
I_{\mu} & < 2(M+1-\nu) \int_{0}^{\infty} z^{2\mu+\nu+M-4} K_{M+1-\nu}\left(z\right) d z  \nonumber \\
&+ \int_{0}^{\infty} z^{2\mu+\nu+M-3} K_{M+1-\nu}\left(z\right).
\end{align}
The first integral converges for $\mu+\nu-2>0$ and the second converges for $\mu+\nu-\frac{3}{2}>0$. In the former case, one has
\begin{align}
I_{\mu} & <  (M+1-\nu) 2^{2\mu+\nu+M-4} \Gamma(\mu+M-1) \Gamma(\mu+\nu-2)\nonumber \\
&+  2^{2\mu+\nu+M-4} \Gamma(\mu+M-\frac{1}{2}) \Gamma(\mu+\nu-\frac{3}{2}).
\end{align}
Consequently, we conclude that \emph{the integral converges only for} $\mu+\nu-2>0$: for $\mu=2$ this implies that $\nu>0$ which is verified. In contrast, when $\mu=1$, one must have $\nu > 1$. In other words, the term in \emph{the FIM corresponding to the noise parameters is always bounded} since it depends on $I_{2}$ only. The situation is different for signal parameters. In an unconditional model where $\R$ would depend on signal parameters as well, the FIM is bounded. In contrast, in the conditional model where signal parameters are embedded in the mean of the distribution, the FIM corresponding to signal parameters\emph{ is bounded only for $\nu > 1$}: otherwise, it is unbounded. The latter case corresponds to the so-called non regular case corresponding to distributions with singularities, as studied e.g., in \cite{Ibragimov81}. 

Before pursuing our study of the FIM for the specific case of $K$-distributed noise, let us make an important observation. For the $K$ distribution, we have just proven that  $I_{1}$ does not exist for $\nu \leq 1$. However,  see \eqref{E{Q^mu}},  $\E{\Qtp^{\mu}}$ exists if and only if $\mu+M > 0$ and $\mu+\nu>0$. The latter condition implies that, when $\nu \leq 1$, $\E{\Qtp^{-1}}=\delta_{M,g}^{-1}  \int_{0}^{\infty} \Q^{M-2} g(\Q) d \Q$ does not exist. Observe that convergence of the latter integral is problematic in a neighborhood of $0$, since for $\Q_{0} > 1$, $\int_{\Q_{0}}^{b}\Q^{M-2} g(\Q) d \Q <  \int_{\Q_{0}}^{b}\Q^{M-1} g(\Q) d \Q < \delta_{M,g}$ as $p(.)$ is a density.  Therefore, at least for $K$-distributed noise, if $\E{\Qtp^{-1}}$ does not exist, then $\E{\Qtp \phi^{2}(\Qtp)}$ is unbounded. At this stage, one may wonder if this property extends to any other elliptical distribution. It turns out that  this is indeed the case, as stated and proved in the next proposition.

\begin{prop}
Whatever the p.d.f. of the modular variate $\Qtp$, if $\E{\Qtp^{-1}} = \infty$ then $\E{\Qtp \phi^{2}(\Qtp)} = \infty$.
\end{prop}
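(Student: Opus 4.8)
The plan is to rewrite both expectations as integrals against the density generator and then link them through a single integration by parts that is controlled by the Cauchy--Schwarz inequality. Using $p(\Q) = \delta_{M,g}^{-1} \Q^{M-1} g(\Q)$ and $\phi(\Q) = -g'(\Q)/g(\Q)$, one has $\E{\Qtp^{-1}} = \delta_{M,g}^{-1} \int_{0}^{\infty} \Q^{M-2} g(\Q)\, d\Q$ and $\E{\Qtp \phi^{2}(\Qtp)} = \delta_{M,g}^{-1} \int_{0}^{\infty} \Q^{M} (g'(\Q))^{2}/g(\Q)\, d\Q$. Hence the claim reduces to a purely analytic implication: if $J := \int_{0}^{\infty} \Q^{M-2} g(\Q)\, d\Q = \infty$, then $I := \int_{0}^{\infty} \Q^{M} (g'(\Q))^{2}/g(\Q)\, d\Q = \infty$.

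First I would localize the divergence of $J$ at the origin, exactly as in the argument given just above the proposition for the $K$ case: for any fixed $c \geq 1$ one has $\int_{c}^{\infty} \Q^{M-2} g \leq \int_{c}^{\infty} \Q^{M-1} g \leq \delta_{M,g} < \infty$, so $J = \infty$ forces $\int_{0}^{c} \Q^{M-2} g = \infty$, i.e. $A(a) := \int_{a}^{c} \Q^{M-2} g(\Q)\, d\Q \to \infty$ as $a \to 0^{+}$. I would then argue by contradiction, assuming $I < \infty$; this in particular bounds $B := \int_{0}^{c} \Q^{M} (g')^{2}/g\, d\Q \leq I < \infty$ uniformly in $a$.

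The core step is the identity obtained by integrating $\frac{d}{d\Q}(\Q^{M-1} g) = (M-1)\Q^{M-2} g + \Q^{M-1} g'$ over $[a,c]$, which gives $(M-1) A(a) = [\Q^{M-1} g]_{a}^{c} - \int_{a}^{c} \Q^{M-1} g'(\Q)\, d\Q$. The boundary term $-a^{M-1} g(a)$ is nonpositive and can simply be discarded, while $c^{M-1} g(c)$ is a fixed constant. For the remaining integral I would apply Cauchy--Schwarz in the split $\Q^{M-1} g' = \big(\Q^{M/2} g'/\sqrt{g}\big)\big(\Q^{(M-2)/2}\sqrt{g}\big)$, which yields $\big| \int_{a}^{c} \Q^{M-1} g' \big| \leq \sqrt{B}\,\sqrt{A(a)}$. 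Combining these, $(M-1) A(a) \leq c^{M-1} g(c) + \sqrt{B}\,\sqrt{A(a)}$; dividing by $A(a)$ and letting $a \to 0$ (using $M \geq 2$) forces $M-1 \leq 0$, a contradiction. Therefore $I = \infty$, which is the assertion.

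The main obstacle, and the reason a naive termwise comparison cannot work, is that $J$ involves $g$ whereas $I$ involves $g'$, so the two are coupled only through integration by parts; the one delicate point is making sure the boundary contribution at the lower endpoint does not ruin the estimate. This is resolved by the observation that $g \geq 0$ renders that term favorable, so that no assumption on the (possibly singular) behaviour of $g$ near the origin is required. I would also record that the argument uses $M \geq 2$, which always holds for a sensor array, and that differentiability of $g$ is implicit since $\phi = -g'/g$ already appears in the Fisher information matrix.
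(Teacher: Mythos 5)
Your proof is correct, and it reaches the conclusion by a genuinely different logical route than the paper, even though both arguments pivot on the same elementary identity. The paper works with $p(\Q)=\delta_{M,g}^{-1}\Q^{M-1}g(\Q)$ and the function $\psi(\Q)=\Q\phi(\Q)=(M-1)-\Q\,\partial\ln p(\Q)/\partial\Q$, expands the perfect square $\Q^{-1}\psi^{2}(\Q)p(\Q)$ into three terms, observes that the cross term is a total derivative $-2(M-1)p'(\Q)$ integrating to controlled boundary values and that the remaining square term is nonnegative, and thus obtains \emph{directly} the lower bound $\int_a^b \Q\phi^2(\Q)p(\Q)\,d\Q \geq (M-1)^2\int_a^b \Q^{-1}p(\Q)\,d\Q - 2(M-1)\left[p(b)-p(a)\right]$, from which divergence is immediate. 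You instead run the argument by contradiction: the same product-rule identity $(\Q^{M-1}g)'=(M-1)\Q^{M-2}g+\Q^{M-1}g'$ is integrated once, the lower boundary term is discarded by sign exactly as the paper discards $p(a)\geq 0$, and the cross term is controlled by Cauchy--Schwarz rather than by recognizing a total derivative, yielding $(M-1)A(a)\leq c^{M-1}g(c)+\sqrt{B}\sqrt{A(a)}$ and hence boundedness of $A(a)$ whenever $B<\infty$. Since Cauchy--Schwarz is itself the discriminant condition of a completed square, your proof is in essence the paper's inequality read in the contrapositive direction; what it buys is a cleaner separation of the two integrals (no need to introduce $\psi$ or pass from $g$ to $p$), at the cost of an indirect argument and of losing the explicit quantitative lower bound that the paper's direct decomposition provides. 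Two small credits to your version: you correctly localize the divergence at the origin before estimating (the paper does this only in the preceding discussion of the $K$-distributed case), and you make explicit the requirement $M\geq 2$, which the paper leaves implicit even though its own proof also degenerates when $M=1$ (the coefficient $(M-1)^2$ vanishes). The only points you share with the paper as unexamined technicalities are the absolute continuity of $\Q^{M-1}g$ needed for the integration by parts and the strict positivity of $g$ on the domain of integration; both are implicit in the standing assumption that $\phi=-g'/g$ exists.
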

\begin{proof}
For the sake of notational convenience, we temporarily omit the subscript $_{\tp}$ and use $\Q$ instead of $\Qtp$. Let us first observe that
\begin{equation}
\E{\Q \phi^{2}(\Q)} =  \int_{0}^{\infty} \Q \phi^{2}(\Q) p(\Q) d\Q =  \int_{0}^{\infty} \Q^{-1} \psi^{2}(\Q) p(\Q) d\Q.
\end{equation}
Since $p(\Q) = \delta_{M,g}^{-1} \Q^{M-1} g(\Q)$, one can write
\begin{align}
\psi(\Q) &= - \Q \frac{g'(\Q)}{g(Q)} = -\Q \frac{\partial \ln g(\Q)}{\partial \Q} \nonumber \\
&= -\Q \frac{\partial \ln \Q^{1-M} p(\Q)}{\partial \Q}  \nonumber \\
&= - (1-M) \Q \frac{\partial \ln \Q}{\partial \Q}  - \Q \frac{\partial \ln  p(\Q)}{\partial \Q}  \nonumber \\
&= (M-1) - \Q \frac{\partial \ln  p(\Q)}{\partial \Q}.
\end{align}
which implies that
\begin{align}
\Q^{-1} \psi^{2}(\Q)&= (M-1)^{2} \Q^{-1} - 2 (M-1) \frac{\partial \ln  p(\Q)}{\partial \Q}  \nonumber \\
&+ \Q \left[ \frac{\partial \ln  p(\Q)}{\partial \Q} \right]^{2}.
\end{align}
Therefore
\begin{align}
\int_{a}^{b}\Q \phi^{2}(\Q) p(\Q) d\Q &= (M-1)^{2} \int_{a}^{b} \Q^{-1} p(\Q) d\Q - 2 (M-1) \int_{a}^{b}  \frac{\partial \ln  p(\Q)}{\partial \Q} p(\Q) d\Q\nonumber \\
&+ \int_{a}^{b}  \Q \left[ \frac{\partial \ln  p(\Q)}{\partial \Q} \right]^{2} p(\Q) d\Q \nonumber \\
&= (M-1)^{2} \int_{a}^{b} \Q^{-1} p(\Q) d\Q - 2 (M-1) \int_{a}^{b}   p'(\Q) d\Q\nonumber \\
&+ \int_{a}^{b}  \Q \left[ \frac{\partial \ln  p(\Q)}{\partial \Q} \right]^{2} p(\Q) d\Q  \nonumber \\
&= (M-1)^{2} \int_{a}^{b} \Q^{-1} p(\Q) d\Q - 2  (M-1) \left[ p(b) - p(a) \right] \nonumber \\
&+ \int_{a}^{b}  \Q \left[ \frac{\partial \ln  p(\Q)}{\partial \Q} \right]^{2} p(\Q) d\Q. 
\end{align}
The third term of the sum is always positive. In the second term, we have that $\lim_{b \rightarrow \infty} p(b) = 0$. It follows that divergence of $\int_{a}^{b} \Q^{-1} p(\Q) d\Q$ is a sufficient condition for divergence of $\int_{a}^{b}\Q \phi^{2}(\Q) p(\Q) d\Q$. As said before $\lim_{b \rightarrow \infty} \int_{a}^{b} \Q^{-1} p(\Q) d\Q$ exists, and therefore a sufficient condition for $\E{\Q \phi^{2}(\Q)}$ to be undounded is that $\lim_{a \rightarrow 0} \int_{a}^{\infty} \Q^{-1} p(\Q) d\Q = \E{\Q^{-1}}$ is unbounded.
\end{proof}

Let us now go back to the $K$-distributed case and  investigate whether it is possible to derive a simple expression for $I_{\mu}$ and subsequently $\alpha_{\mu}$, assuming that $\mu+\nu-2>0$. Towards this end, let us make use of
\begin{equation}
K_{M+1-\nu}(z) = \frac{2(M-\nu)}{z} K_{M-\nu}(z) + K_{M-1-\nu}(z)
\end{equation}
to write that
\begin{align}
I_{\mu} &= \int_{0}^{\infty} z^{2\mu+\nu+M-3} \frac{K^{2}_{M+1-\nu}\left(z\right)}{K_{M-\nu}\left(z\right)} d z \nonumber \\
&=4(M-\nu)^{2} \int_{0}^{\infty} z^{2\mu+\nu+M-5} K_{M-\nu}(z)  d z \nonumber \\
&+4(M-\nu) \int_{0}^{\infty} z^{2\mu+\nu+M-4} K_{M-1-\nu}(z) d z \nonumber \\
&+ \int_{0}^{\infty} z^{2\mu+\nu+M-3} \frac{K^{2}_{M-1-\nu}\left(z\right)}{K_{M-\nu}\left(z\right)} d z \nonumber \\
&= 2^{2\mu+\nu+M-4} (M-\nu)^{2} \Gamma(\mu+M-2) \Gamma(\mu+\nu-2) \nonumber \\
& + 2^{2\mu+\nu+M-3} (M-\nu) \Gamma(\mu+M-2) \Gamma(\mu+\nu-1) \nonumber \\
&+ \int_{0}^{\infty} z^{2\mu+\nu+M-3} \frac{K^{2}_{M-1-\nu}\left(z\right)}{K_{M-\nu}\left(z\right)} d z
\end{align} 
The last term is obviously not possible to obtain in closed-form so that we use a ``large $M-\nu$'' approximation of the modified Bessel function \cite{NIST10}
\begin{equation}\label{approxK_largeM}
K_{M-\nu}(z) \simeq \sqrt{\frac{\pi}{2(M-\nu)}} \left( \frac{ez}{2(M-\nu)}\right)^{-(M-\nu)}
\end{equation}
which results in
\begin{align}
\frac{K_{M-1-\nu}\left(z\right)}{K_{M-\nu}\left(z\right)} & \simeq \left( \frac{M-\nu}{M-1-\nu}\right)^{1/2} \frac{(M-1-\nu)^{M-1-\nu)}}{(M-\nu)^{M-\nu)}}  \frac{ez}{2}\nonumber \\
& \simeq \left( \frac{M-\nu}{M-1-\nu}\right)^{1/2} \frac{1}{e(M-\nu)}  \frac{ez}{2} \nonumber \\
&= \frac{z}{2(M-\nu)^{1/2} (M-1-\nu)^{1/2}}.
\end{align}
Therefore, 
\begin{align}
\int_{0}^{\infty} z^{2\mu+\nu+M-3} \frac{K^{2}_{M-1-\nu}\left(z\right)}{K_{M-\nu}\left(z\right)} d z & \simeq \frac{1}{2(M-\nu)^{1/2} (M-1-\nu)^{1/2}}  \int_{0}^{\infty} z^{2\mu+\nu+M-2} K_{M-1-\nu}\left(z\right) d z\nonumber \\
&= \frac{ 2^{2\mu+\nu+M-4} }{(M-\nu)^{1/2} (M-1-\nu)^{1/2}}\Gamma(\mu+M-1) \Gamma(\mu+\nu)
\end{align}
We finally have
\begin{align}\label{alpha_mu_approx_3terms}
&\alpha_{\mu} = \frac{\beta^{\mu-2} I_{\mu}}{2^{2\mu+\nu+M-4}\Gamma(M) \Gamma(\nu)}   \nonumber \\
& \simeq  \beta^{\mu-2} (M-\nu)^{2} \frac{\Gamma(\mu+M-2) \Gamma(\mu+\nu-2)}{\Gamma(M) \Gamma(\nu)} \nonumber \\
&+ 2 \beta^{\mu-2} (M-\nu) \frac{\Gamma(\mu+M-2) \Gamma(\mu+\nu-1)}{\Gamma(M) \Gamma(\nu)} \nonumber \\
&+ \beta^{\mu-2} (M-\nu)^{-1/2} (M-1-\nu)^{-1/2} \frac{\Gamma(\mu+M-1) \Gamma(\mu+\nu)}{\Gamma(M) \Gamma(\nu)}.
\end{align}

If the large $M-\nu$ approximation is made from the start, then one has
\begin{align}\label{approx_ratioK}
\frac{K_{M+1-\nu}\left(z\right)}{K_{M-\nu}\left(z\right)} & \simeq  2(M+1-\nu)^{1/2} (M-\nu)^{1/2} z^{-1}
\end{align}
so that
\begin{align}
I_{\mu}  &\simeq 2(M+1-\nu)^{1/2} (M-\nu)^{1/2} \int_{0}^{\infty} z^{2\mu+\nu+M-4} K_{M+1-\nu}\left(z\right) d z  \nonumber \\
&= 2^{2\mu+\nu+M-4} (M+1-\nu)^{1/2} (M-\nu)^{1/2} \Gamma(\mu+M-1) \Gamma(\mu+\nu-2)
\end{align}
and hence
\begin{equation}\label{alpha_mu_approx_1term}
\alpha_{\mu} \simeq \beta^{\mu-2} (M+1-\nu)^{1/2} (M-\nu)^{1/2} \frac{\Gamma(\mu+M-1) \Gamma(\mu+\nu-2)}{\Gamma(M) \Gamma(\nu)}
\end{equation}

Figure \ref{fig:alphamu_approx} compares the approximations in  \eqref{alpha_mu_approx_3terms} and \eqref{alpha_mu_approx_1term}, as well as a method which uses random number generation to approximate $\alpha_{\mu}$ based on its initial definition in \eqref{alpha_mu_ini}. More precisely, we generated a large number of random variables $\Q \dist \pdfGamma{\nu}{\beta} \times \Cchisquare{M}$ and replace the statistical expectation of  \eqref{alpha_mu_ini} by an average over the so-generated random variables. As can be observed from Figure \ref{fig:alphamu_approx}, the $3$ approximations provide very close values, which enable one to validate the closed-form expressions in \eqref{alpha_mu_approx_3terms} and \eqref{alpha_mu_approx_1term}.
\begin{figure}[htb]
\centering
\includegraphics[width=7.5cm]{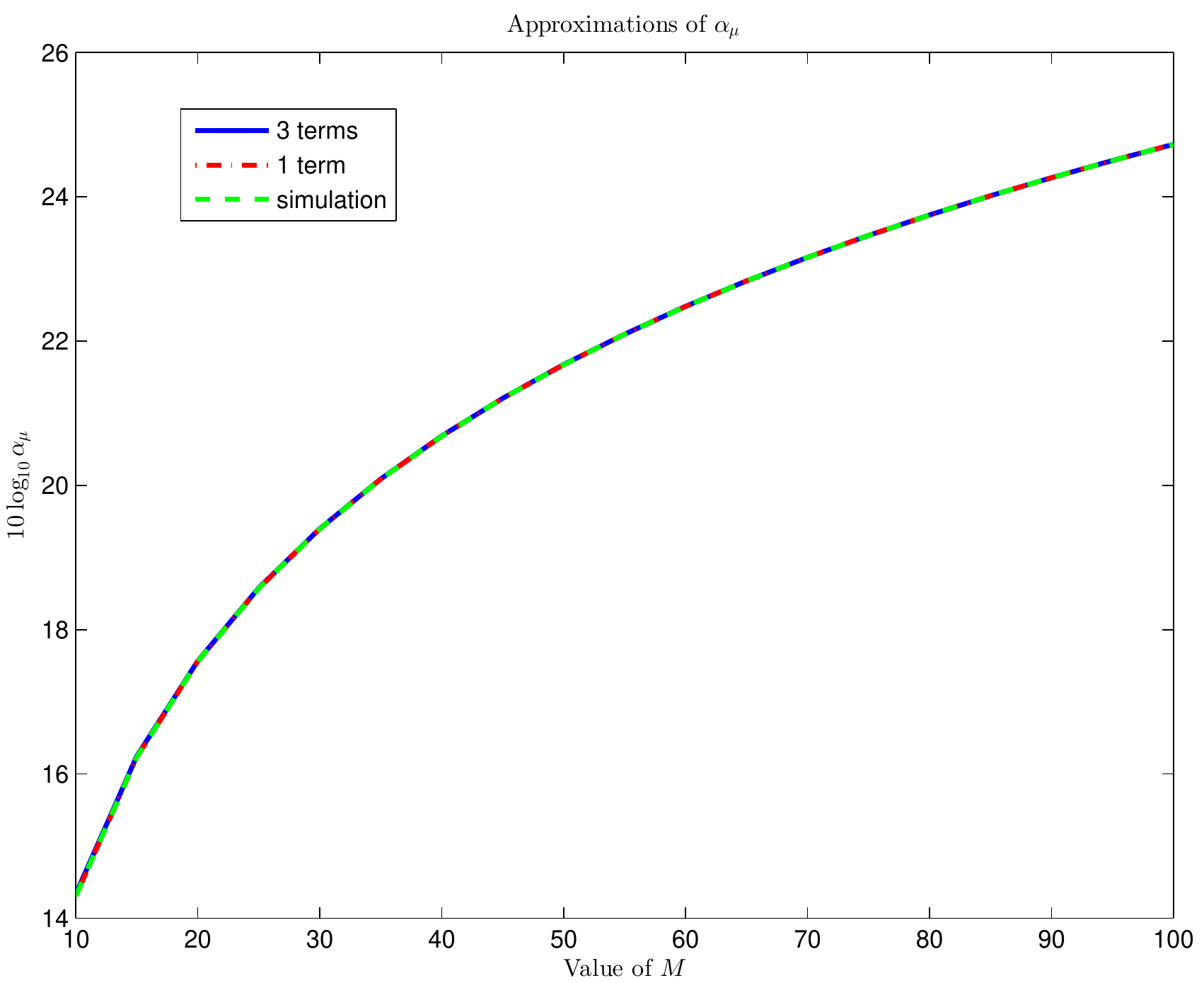}
\caption{Comparison of the approximations of $\alpha_{\mu}$ in \eqref{alpha_mu_approx_3terms} and \eqref{alpha_mu_approx_1term}.  $\nu=1.5$ and $\beta=1/\nu$.}
\label{fig:alphamu_approx}
\end{figure}

\subsection{Maximum Likelihood estimation\label{section:mleK}}
We now focus on maximum likelihood (ML) estimation of direction of arrival $\phitrue$, signal waveforms $s_{\tp}$ and covariance matrix $\R$ in the model
\begin{align}\label{modelxpt_K}
\xpt &= \va(\phitrue) s_{\tp} + \sqrt{\tau_{\tp}} \npt; \quad \tp=1,\ldots,\Tp \nonumber \\
\xst &=  \sqrt{\tau_{\ts}} \nst; \quad \ts=1,\ldots,\Ts
\end{align}
where  $\tau_{\tp}, \tau_{\ts} \sim \pdfGamma{\nu}{\beta}$, and $\npt , \nst \sim \vCN{\vect{0}}{\R}$. The joint distribution of $(\Xp,\Xs)$ is given by
\begin{align}
p(\Xp,\Xs) &\propto \det{\R}^{-(\Tp+\Ts)} \prod_{\ts=1}^{\Ts} \left[ \xst^{H} \iR \xst \right]^{\frac{\nu-M}{2}} K_{M-\nu}\left( 2 \sqrt{\xst^{H} \iR \xst / \beta} \right) \nonumber \\
&\times \prod_{\tp=1}^{\Tp} \left[ \zpt^{H} \iR\zpt \right]^{\frac{\nu-M}{2}} K_{M-\nu}\left( 2 \sqrt{ \zpt^{H} \iR\zpt / \beta }\right)
\end{align}
where $\zpt=\xpt- \va(\phi) s_{\tp}$. Joint estimation of all parameters appears to be very complicated and hence we will proceed in two steps. At first, we assume that $\R$ is known and derive the ML estimates of $\phi$ and $s_{\tp}$. Then, $\R$ is substituted for some estimate obtained from observation of $\Xs$ only.

\subsubsection{DoA estimation with known $\R$}
Assuming that $\R$ is known, one needs to maximize with respect to $\phi$ and $s_{\tp}$
\begin{equation}
p(\Xp) \propto \prod_{\tp=1}^{\Tp} g \left( \left[\xpt- \va(\phi) s_{\tp}\right]^{H} \iR \left[\xpt- \va(\phi) s_{\tp}\right] \right)
\end{equation}
where $g(.)$ is given by \eqref{g_K}. Since $g(.)$ is monotonically decreasing, see \eqref{g'(q)}, it follows that $p(\Xp)$ is maximized when the argument of $g(.)$ is minimized. However,
\begin{align}
&\left[\xpt- \va(\phi) s_{\tp}\right]^{H} \iR \left[\xpt- \va(\phi) s_{\tp}\right] \nonumber \\
& = \left[\va^{H}(\phi) \iR \va(\phi)  \right] \left| s_{\tp} - \frac{\va^{H}(\phi) \iR \xpt}{\va^{H}(\phi) \iR \va(\phi)} \right|^{2} \nonumber \\
&+ \xpt^{H} \iR \xpt - \frac{\left| \va^{H}(\phi) \iR \xpt \right|^{2}}{\va^{H}(\phi) \iR \va(\phi)}.
\end{align}
Therefore, for any $\phi$, $p(\Xp)$ is maximized when
\begin{equation}
s_{\tp} = \frac{\va^{H}(\phi) \iR \xpt}{\va^{H}(\phi) \iR \va(\phi)}.
\end{equation}
It ensues that one needs now to maximize, with respect to $\phi$
\begin{equation}\label{f(phi)}
f(\phi) = \prod_{\tp=1}^{\Tp} g \left( \xpt^{H} \iR \xpt - \frac{\left| \va^{H}(\phi) \iR \xpt \right|^{2}}{\va^{H}(\phi) \iR \va(\phi)} \right).
\end{equation}
with $g(z) = z^{\frac{\nu-M}{2}} K_{M-\nu}(2\sqrt{z/ \beta})$. In order to avoid calculation of a modified Bessel function and thus in order to simplify estimation, we propose to make use of the ``large $M-\nu$'' approximation of the modified Bessel function given in \eqref{approxK_largeM} to write
\begin{align}
g(z) &= z^{\frac{\nu-M}{2}} K_{M-\nu}(2\sqrt{z/ \beta}) \nonumber \\
&\simeq z^{\frac{\nu-M}{2}} \times \sqrt{\frac{\pi}{2(M-\nu)}} \left( \frac{e\sqrt{z/ \beta}}{(M-\nu)}\right)^{-(M-\nu)} \nonumber \\
&= \mathrm{const.} z^{\nu-M}.
\end{align}
This approximation results in an approximate maximum likelihood  (AML) estimator of $\phi$ which consists in maximizing
\begin{equation}\label{ftilde(phi)}
\tilde{f}(\phi) = \prod_{\tp=1}^{\Tp} \left[ \xpt^{H} \iR \xpt - \frac{\left| \va^{H}(\phi) \iR \xpt \right|^{2}}{\va^{H}(\phi) \iR \va(\phi)} \right]^{\nu-M}.
\end{equation}
Note that
\begin{equation}\label{log_ftilde(phi)}
\log \tilde{f}(\phi) = (\nu-M) \sum_{\tp=1}^{\Tp} \log \left[ \xpt^{H} \iR \xpt - \frac{\left| \va^{H}(\phi) \iR \xpt \right|^{2}}{\va^{H}(\phi) \iR \va(\phi)} \right]
\end{equation}
which should be compared to the concentrated log likelihood function in the Gaussian case, as given by
\begin{equation}\label{log_f_G(phi)}
\log f_{G}(\phi) =  - \sum_{\tp=1}^{\Tp}  \left[ \xpt^{H} \iR \xpt - \frac{\left| \va^{H}(\phi) \iR \xpt \right|^{2}}{\va^{H}(\phi) \iR \va(\phi)} \right].
\end{equation}

A few remarks are in order about these estimates, in particular about the behavior of the AML estimator in the case of unbounded FIM, i.e., when $0 < \nu < 1$ .  First, note that all estimates will be a function of
\begin{align}\label{t(xt,phi)}
t(\xpt,\phi) &= \xpt^{H} \iR \xpt - \frac{\left| \va^{H}(\phi) \iR \xpt \right|^{2}}{\va^{H}(\phi) \iR \va(\phi)} \nonumber \\
&= \xpt^{H} \isqrtR \Porthaw \isqrtR \xpt
\end{align}
where  $\Porthaw$ is the projection onto the orthogonal complement of $\isqrtR \va(\phi)$.  Compared to  \eqref{log_f_G(phi)}, the logarithm operation in \eqref{ftilde(phi)} will strongly emphasize those snapshots $\xpt$  for which $t(\xpt,\phi)$ is small. Let us thus investigate the properties of this statistic, when evaluated at the \emph{true} value of signal DOA $\phitrue$. Using the fact that $\isqrtR \xpt = \isqrtR \atrue s_{\tp} + \sqrt{\tau_{\tp}} \wpt$, where $\wpt \sim \vCN{\vect{0}}{\I_{M}}$ and  $\atrue$ is a short-hand notation for $\va(\phitrue)$, one has 
\begin{equation}
t(\xpt,\phitrue) = \tau_{\tp} \wpt^{H} \Porthawtrue \wpt \dist \tau_{\tp} \times \Cchisquare{M-1}.
\end{equation}
For small $\nu$ ($0 < \nu < 1$), it follows that, in the vicinity of $\phitrue$, the snapshot with minimal $t(\xpt,\phi)$ is more or less the snapshot for which $\tau_{\tp}$ is minimum, hence the snapshot for which noise power is minimum, which makes sense. If we let $u_{\Tp} = \min_{1 \leq \tp \leq \Tp} \tau_{\tp}$, then its cumulative density function (c.d.f.)  is given by  
\begin{align}\label{cdf_min_tau}
\Prob{u_{\Tp} \leq \eta} &= 1 - \left( 1 - \Prob{\tau_{\tp} \leq \eta} \right)^{\Tp} \nonumber \\
&= 1 - \left[ 1 - \gamma\left(\nu, \eta \beta^{-1}\right) \right]^{\Tp} 
\end{align}
which is shown in Figure \ref{fig:cdf_min_tau_T=4}. Obviously, with small $\nu$, the snapshot which corresponds to the minimum value of $\tau_{\tp}$ exhibits a very high signal to noise ratio and, due to the emphasizing effect of the $\log$ operation in \eqref{ftilde(phi)}, the performance of the AML estimator is likely to be driven mainly by this particular snapshot. This is illustrated in Figure \ref{fig:MSE_Rknown_order_tau_vs_T_vs_nu} where we display the mean-square error (MSE) of the AML estimate which uses all $\Tp$ snapshots and the MSE of an hypothetical AML estimator which would use only the snapshot $\x_{t_{\min}}$ corresponding to the minimum value of $\tau_{\tp}$. The scenario of this simulation is described in the next section. This figure shows a marginal loss of the AML estimator using $\x_{t_{\min}}$ only, as compared to the full AML estimator, especially for small $\nu$.
\begin{figure}[htb]
\centering
\includegraphics[width=7.5cm]{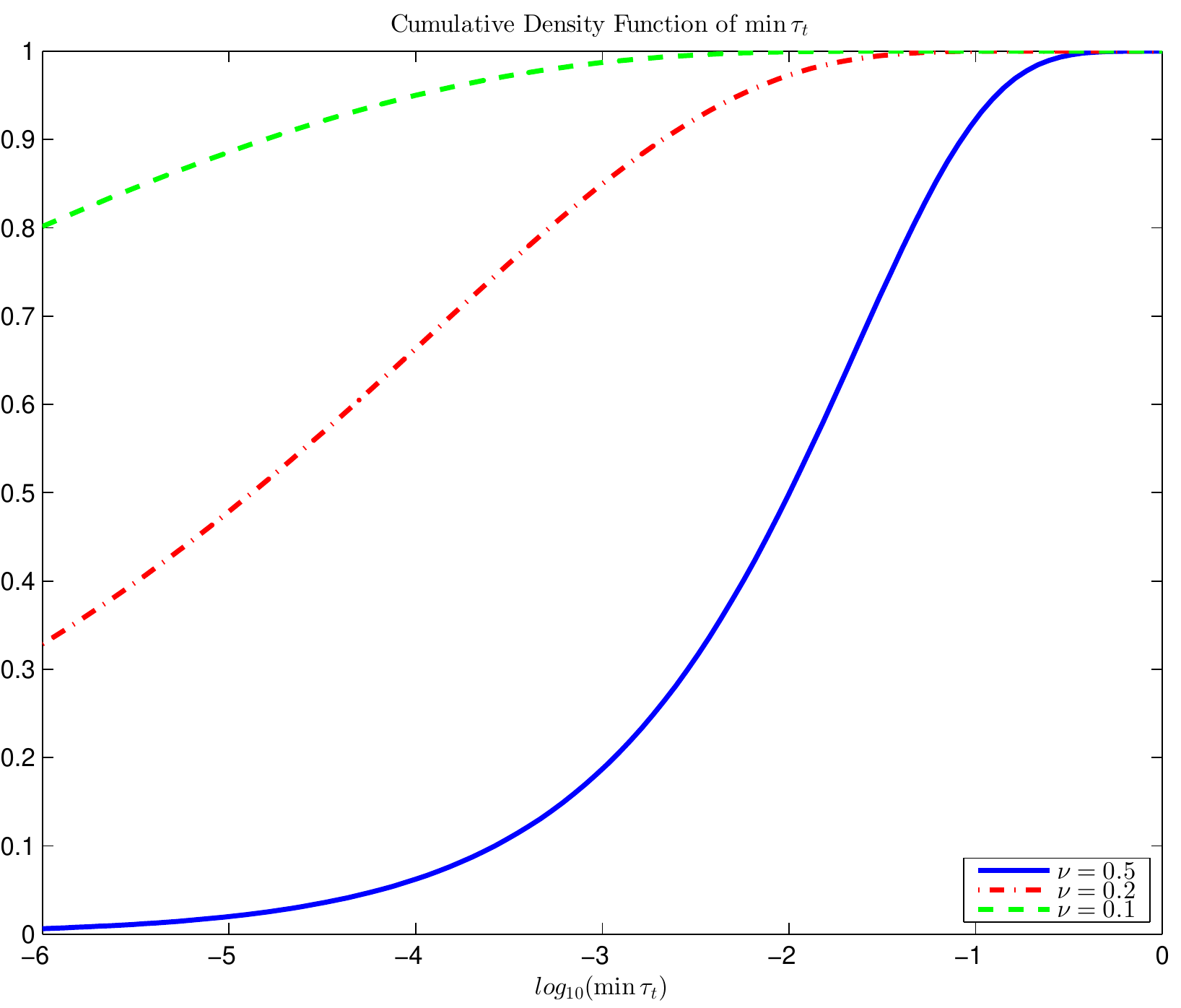}
\caption{Cumulative density function of $\min_{1 \leq \tp \leq \Tp} \tau_{\tp}$. $M=16$ and $\Tp=4$.}
\label{fig:cdf_min_tau_T=4}
\end{figure}
\begin{figure}[htb]
\centering
\includegraphics[width=7.5cm]{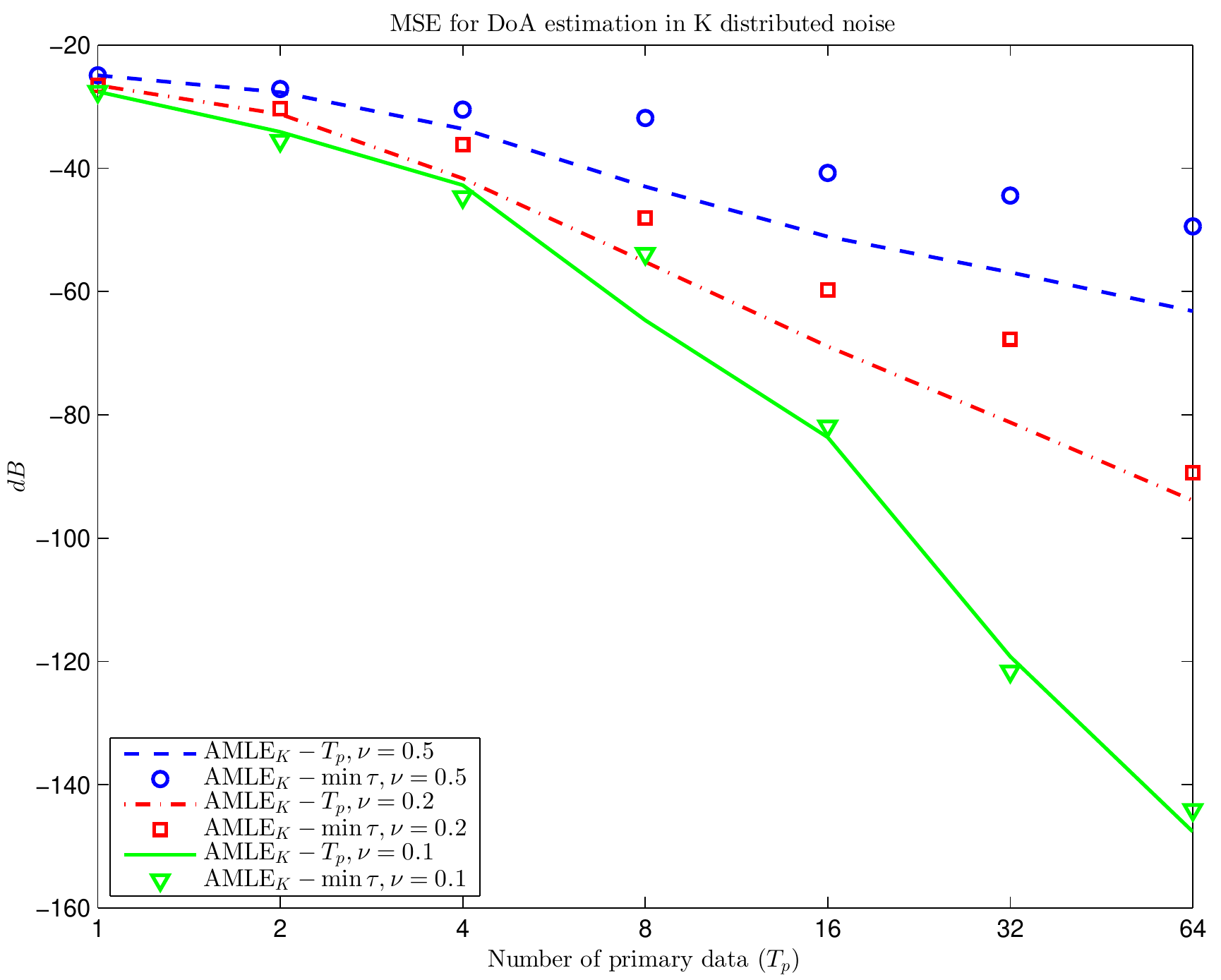}
\caption{Mean square error of  AML estimator using either all snapshots or a single snapshot  corresponding to minimal $\tau_{\tp}$. $\R$ known,  $M=16$ and $SNR=3$dB.}
\label{fig:MSE_Rknown_order_tau_vs_T_vs_nu}
\end{figure}

Let us thus analyze the behavior of the AML estimators. For the sake of notational convenience, let $\phi_{K}^{\Tp}$ and  $\phi_{K}^{\min}$ denote the AML estimator using $\Tp$ snapshots with $K$-distributed noise and the AML estimator using the snapshot $\x_{t_{\min}}$ corresponding to the minimal $\tau_{\tp}$, respectively.  Observe that, when using a single snapshot $\x_{t_{\min}}$,  minimizing \eqref{ftilde(phi)} is equivalent to minimizing the Gaussian likelihood function in \eqref{log_f_G(phi)} with $\Tp=1$. Since $\x_{t_{\min}}$ exhibits a high signal to noise ratio,  $\phi_{K}^{\min}$ is close to $\phitrue$, one can make a Taylor expansion and relate the error $\phi_{K}^{\min} - \phitrue$ to the error $\x_{t_{\min}}-\atrue s_{t_{\min}}$ as 
\begin{equation}
\phi_{K}^{\min} - \phitrue \simeq \sqrt{u_{\Tp}} \vzeta^{H} \n_{t_{\min}} 
\end{equation}
where $\vzeta$ is some vector that depends essentially on the derivatives of $\va(\phi)$ \cite{Renaux06} and whose expression is not needed here. One can simply notice that $\vzeta$ would be the same with Gaussian noise and a single snapshot, since maximizing \eqref{ftilde(phi)} or \eqref{log_f_G(phi)} is equivalent when one snapshot is used . This implies that
\begin{equation}\label{aMSE_AMLEK_min_tau}
\E{\left( \phi_{K}^{\min} - \phitrue\right)^{2}} \simeq \E{u_{\Tp}}  \vzeta^{H} \R \vzeta.
\end{equation}
Observe that $\vzeta^{H} \R \vzeta$ is the mean-square error (MSE) that would obtained in Gaussian noise and a single snapshot, which is about $\Tp$ times the MSE obtained in the Gaussian case and using $\Tp$ snapshots, and the latter is approximately the Gaussian CRB. The MSE of $\phi_{K}^{\min}$ depends on  $\E{u_{\Tp}}$ where $u_{\Tp}$ is the minimum value of a set of $\Tp$ independent and identically distributed (actually gamma distributed) variables. Therefore, in order to obtain $\E{u_{\Tp}}$, one must consider statistics of extreme values, a field that has received considerable attention for a long time, see e.g.,  \cite{Gumbel35,Gnedenko43,Gumbel58}. It turns out that only asymptotic (as $\Tp \rightarrow \infty$) results are available and we build upon them to derive the rate of convergence of $\E{\left( \phi_{K}^{\min} - \phitrue\right)^{2}}$. First, note that
\begin{align}
\Prob{\Tp^{1/ \nu} u_{\Tp} \geq x} &=  \Prob{u_{\Tp} \geq \Tp^{-1/ \nu} x} \nonumber \\
&= \left( \Prob{\tau_{\tp} \geq \Tp^{-1/ \nu} x} \right)^{\Tp} \nonumber \\
&= \left(1 - \Prob{\tau_{\tp} \leq \Tp^{-1/ \nu} x} \right)^{\Tp} \nonumber \\
&=\left[ 1 - \gamma\left(\nu,\beta^{-1}\Tp^{-1/ \nu} x\right) \right]^{\Tp}.
\end{align}
Now since $\nu$ is small and $\Tp$ is large, $\Tp^{-1/ \nu}$ is very small and we can approximate $\gamma(a,y) \simeq \left[ a \Gamma(a) \right]^{-1} y^{a}$, which yields
\begin{align}\label{cdf_vTp}
\Prob{\Tp^{1/ \nu} u_{\Tp} \geq x} &=\left[ 1 - \gamma\left(\nu,\beta^{-1}\Tp^{-1/ \nu} x\right) \right]^{\Tp} \nonumber \\
&\simeq \left[ 1 - \frac{\beta^{-\nu} \Tp^{-1} x^{\nu}}{\nu \Gamma(\nu)} \right]^{\Tp} \nonumber \\
&\simeq \ex{-\frac{\beta^{-\nu} \Tp^{-1} x^{\nu}}{\nu \Gamma(\nu)}}^{\Tp} \nonumber \\
&= \ex{-\frac{\beta^{-\nu} x^{\nu}}{\nu \Gamma(\nu)}}.
\end{align}
It follows that asymptotically, $v_{\Tp} = \Tp^{1/ \nu} u_{\Tp}$ converges to the distribution in \eqref{cdf_vTp}, whose probability density function is  
\begin{equation}
p(v_{\Tp}) = \frac{\beta^{-\nu}}{\Gamma(\nu)} v_{\Tp}^{\nu-1} \ex{-\frac{\beta^{-\nu} v_{\Tp}^{\nu}}{\nu \Gamma(\nu)}}.
\end{equation}
Using integration by parts, it follows that
\begin{align}
&\E{v_{\Tp}} = \int_{0}^{\infty} \frac{\beta^{-\nu}}{\Gamma(\nu)} x^{\nu} \ex{-\frac{\beta^{-\nu} x^{\nu}}{\nu \Gamma(\nu)}} dx \nonumber \\
&= \left[ -x \ex{-\frac{\beta^{-\nu} x^{\nu}}{\nu \Gamma(\nu)}} \right]^{\infty}_{0} + \int_{0}^{\infty} \ex{-\frac{\beta^{-\nu} x^{\nu}}{\nu \Gamma(\nu)}} dx \nonumber \\
&= \beta \nu^{1/\nu -1} \Gamma(\nu)^{1/\nu} \int_{0}^{\infty} z^{1/\nu -1} \ex{-z} dz \nonumber \\
&= \beta \nu^{1/\nu -1} \Gamma(\nu)^{1/\nu} \Gamma(\nu^{-1}) \triangleq C(\nu,\beta).
\end{align}
One can then conclude that, as $\Tp$ goes to infinity,
\begin{equation}
\E{\left( \phi_{K}^{\min} - \phitrue\right)^{2}} \simeq  \left[\vzeta^{H} \R \vzeta\right] C(\nu,\beta) \Tp^{-1/\nu}.
\end{equation}
Therefore, in the case of $0 < \nu < 1$, the MSE of $\phi_{K}^{\min}$ decreases as $\Tp^{-1/\nu}$, a rate of convergence much faster than the usual $\Tp^{-1}$. Note that this case corresponds to unbounded FIM. Such rates of convergence are also found with distributions possessing singularities \cite[chapter 6]{Ibragimov81}.

As for the AML estimate obtained from $\Tp$ snapshots, namely $\phi_{K}^{\Tp}$, its MSE is upper-bounded by that $\phi_{K}^{\min}$ (since it uses all snapshots, including $\x_{t_{\min}}$), and is lower-bounded by the MSE that would be obtained if $\tau_{\tp}=u_{\Tp}$ for $\tp=1,\cdots,\Tp$, and this MSE is $\Tp^{-1}$ times the MSE of  $\phi_{K}^{\min}$. Additionally, as said before, we have $\vzeta^{H} \R \vzeta \simeq \Tp CRB_{G}^{\Tp}(\phi)$  where $CRB_{G}^{\Tp}(\phitrue)$ is the Gaussian CRB using $\Tp$ snapshots. Hence, one can bound the MSE of $\phi_{K}^{\Tp}$ as
\begin{equation}
CRB_{G}^{\Tp}(\phitrue) C(\nu,\beta) \Tp^{-1/ \nu} \leq \E{\left( \phi_{K}^{\Tp} - \phitrue\right)^{2}}  \leq CRB_{G}^{\Tp}(\phitrue)C(\nu,\beta)\Tp^{-1/\nu +1}.
\end{equation}
As will be illustrated in the next section, the upper bound is rather tight, while the lower bound is much lower than the actual MSE.

\subsubsection{Estimation of $\R$ using secondary data}
When $\R$ is not known, then the secondary data $\Xs$ can be used to estimate it.  The maximum likelihood estimator is obtained (for $T \geq M$) as the solution (up to a scaling factor) to the following implicit equation \cite{Ollila12}
\begin{align}
\Rml &= \frac{1}{\Ts} \sum_{\ts=1}^{\Ts} \phi \left( \xst^{H} \Rml^{-1} \xst\right)  \xst \xst^{H} \nonumber \\
&=  \frac{1}{\beta^{1/2} \Ts} \sum_{\ts=1}^{\Ts}  \left( \xst^{H} \Rml^{-1} \xst\right)^{-1/2} \frac{K_{M+1-\nu}\left(2 \sqrt{\left( \xst^{H} \Rml^{-1} \xst\right) / \beta}\right)}{K_{M-\nu}\left(2 \sqrt{\left( \xst^{H} \Rml^{-1} \xst\right) / \beta}\right)} \xst \xst^{H}.
\end{align}
$\Rml$ can be obtained through an iterative procedure, whose convergence is guaranteed under the assumptions made \cite{Ollila12}. In order to avoid evaluation of the modified Bessel function, one can use the large $M-\nu$ approximation of $K_{M+1-\nu}\left(z\right) / K_{M-\nu}\left(z\right)$ in \eqref{approx_ratioK} to define $\Raml$ as the solution to the fixed-point solution 
\begin{equation}
\Raml = \frac{(M+1-\nu)^{1/2}(M-\nu)^{1/2}}{\Ts} \sum_{\ts=1}^{\Ts} \frac{\xst \xst^{H}}{\xst^{H} \Raml^{-1} \xst}. 
\end{equation}
Note that $\Raml$ is more or less the well-known Tyler fixed-point estimator \cite{Tyler87}, which again can be obtained from an iterative procedure whose convergence is guaranteed \cite{Pascal08,Chitour08}. The drawbacks of the two above estimators are that 1)they are suited to a $K$ distribution for the noise and 2)$\Ts$ is required to be larger than $M$. In order to gain robustness against these problems, a solution is to use normalized data $\zst = \xst / \left\| \xst \right\|$ whose distribution is independent of that of the noise, and to use regularization. More precisely, we suggest to resort to the following scheme \cite{Abramovich07c,Chen11,Wiesel12}
\begin{subequations}\label{FP_R(eta)}
\begin{align}
\breve{\R}_{k+1}(\eta) &=   (1-\eta) \frac{M}{\Ts} \sum_{\ts=1}^{\Ts} \frac{\zst \zst^{H}}{\zst^{H} \left[\Rhat_{k}(\eta) \right]^{-1} \zst}  + \eta \I_{M} \\
\Rhat_{k+1}(\eta) &= \frac{M}{\Tr{\breve{\R}_{k+1}(\eta)}}\breve{\R}_{k+1}(\eta).
\end{align}
\end{subequations}
and define $\Rracg (\eta) = \lim_{k \rightarrow \infty} \Rhat_{k}(\eta)$ since convergence of this iterative scheme has been proved \cite{Pascal14}. The very good performance of this scheme  has been illustrated in various applications, see e.g., \cite{Chen11,Wiesel12,Pascal14,Abramovich13,Besson13b}, where discussions on how to select the regularization parameter $\eta$ can also be found.

\section{Numerical simulations\label{section:numerical}}
We assume a linear array of $M=16$ elements spaced a half-wavelength apart and we consider the simple scenario of a single source impinging from $\phitrue=10^{\circ}$ embedded in unit power $K$-distributed noise. The covariance matrix $\R$ is given by $\R(k,\ell)=\rho^{|k-\ell|}$ with $\rho=0.99$. The exact and approximate maximum likelihood estimators, which consists in maximizing $ f(\phi)$ in \eqref{f(phi)} $\tilde{f}(\phi)$ in \eqref{ftilde(phi)} were implemented using the Matlab function \texttt{fminbnd}, and the maximum was searched in the interval $\left[ \phitrue- 2 \phi_{3dB} , \phitrue + 2\phi_{3dB} \right]$ where $\phi_{3dB}$ is the half-power beamwidth of the array. The signal waveform was generated from  i.i.d.  Gaussian variables with power $P$ and the signal to noise ratio (SNR) is defined as $SNR = P (\atrue^{H} \R^{-1} \atrue)$. The asymptotic Gaussian CRB, multiplied by the scalar $\alpha_{1} / M$   was used as the bound for $K$-distributed noise.   For the regularized covariance matrix estimator $\Rracg (\eta)$ of \eqref{FP_R(eta)}, the value of $\eta$ was set to $\eta=0.01$. $1000$ Monte-Carlo simulations were used to evaluate the mean-square error (MSE) of the estimates.

In Figures \ref{fig:MSE_vs_T_Ts=32_nu>1} and \ref{fig:MSE_vs_T_Ts=32_nu<1} we plot the CRB (for $\nu > 1$) or the lower and upper bounds of \eqref{lowerbound_besselk} when $\nu < 1$,  as well as the MSE of the ML and AML estimators, as a function of $\Tp$, and compare the case where $\R$ is known to the case where it is estimated from \eqref{FP_R(eta)} with $\Ts=32$ snapshots in the secondary data. The following observations can be made:
\begin{itemize}
\item there is almost no difference between the MLE and the AMLE, and therefore the latter should be favored since it does not require evaluating modified Bessel functions.
\item the MSE in the case where $\R$ is known is lower than that when $\R$ is to be estimated, which is expected. However, the difference is smaller when $\nu < 1$: in other words, it seems that adaptive whitening is not so much penalizing with small $\nu$ while it seems more crucial for $\nu  > 1$. Indeed, for small $\nu$, what matters most is the fact that some snapshots are nearly noiseless, and this is more influential than obtaining a very good whitening.
\item the decrease of the MSE for $\nu > 1$ is roughly of the order $\Tp^{-1}$. When $\nu < 1$, this rate is significantly increased and the MSE decreases very quickly as $\Tp^{-1/ \nu}$, as predicted by the analysis above. This rate of convergence is also observed in Figure \ref{fig:MSE_multi_Rknown_vs_T_vs_nu} where we consider a scenario with two sources at $\phi=10^{\circ},12^{\circ}$.
\item the upper bound in \eqref{lowerbound_besselk} seems to provide quite a good approximation of the actual MSE, at least for $\Tp$ large enough.
\end{itemize}

The influence of $\Ts$ is investigated in Figure \ref{fig:MSE_vs_Ts_T=16}, where one can observe that about $\Ts=64$ is necessary for the performance with estimated $\R$ to be very close to the performance for known $\R$. However, as indicated above, this is less pronounced when $\nu < 1$, where the difference becomes smaller with lower $\Ts$.

Finally, we investigate whether the rate of convergence of the MLE or AMLE when $\nu$ varies is impacted by a small amount of Gaussian noise. More precisely, we run simulations where the data is generated as
\begin{equation}
\xpt = \va(\phitrue) s_{\tp} + \sqrt(1-\alpha) \sqrt{\tau_{\tp}} \R^{1/2} \wpt + \sqrt{\alpha} \vpt
\end{equation}
where $\wpt , \vpt \sim \vCN{\vect{0}}{\I_{M}}$, i.e., the noise is a mixture of $K$-distributed noise and Gaussian distributed noise. The covariance matrix of the noise is now $\R_{K+G} = (1-\alpha) \R + \alpha \I$ and we use the AML estimator assuming that the noise has a $K$ distribution with parameter $\nu$ and known covariance matrix $\R_{K+G}$. In Figure \ref{fig:MSE_mixture_vs_T}, we display the MSE of the AML estimator versus $\Tp$ and versus $\nu$ for different values of $\alpha$. Clearly, the rate of convergence of the estimator is affected  by a small amount of Gaussian noise, even when $\nu$ is small. This indicates that, if noise is not purely $K$-distributed with small $\nu$, we recover the usual behavior of the MSE versus $\Tp$.

\section{Conclusions \label{section:conclu}}
In this paper we addressed the DoA estimation problem in $K$-distributed noise using two data sets. The main result of the paper was to show that, when the shape parameter $\nu$ of the texture Gamma distribution is below $1$, the FIM is unbounded. On the other hand, for $\nu > 1$, the FIM is bounded and we derived an accurate closed-form approximation of the CRB. The maximum likelihood estimator was derived as well as an approximation, which induces non significant losses compared to the  exact MLE. In the non regular case where $\nu < 1$, we derived  lower and  upper  bounds on the mean-square error of the (A)ML estimates and we showed that the rate of convergence of these (A)ML estimates is about $\Tp^{-1/\nu}$ where $\Tp$ is the number of snapshots.

\begin{figure}[p]
 \centering
  \subfigure[][]{\label{fig:MSE_vs_T_Ts=32_nu=10_SNR=3}
  \includegraphics[width=7.5cm]{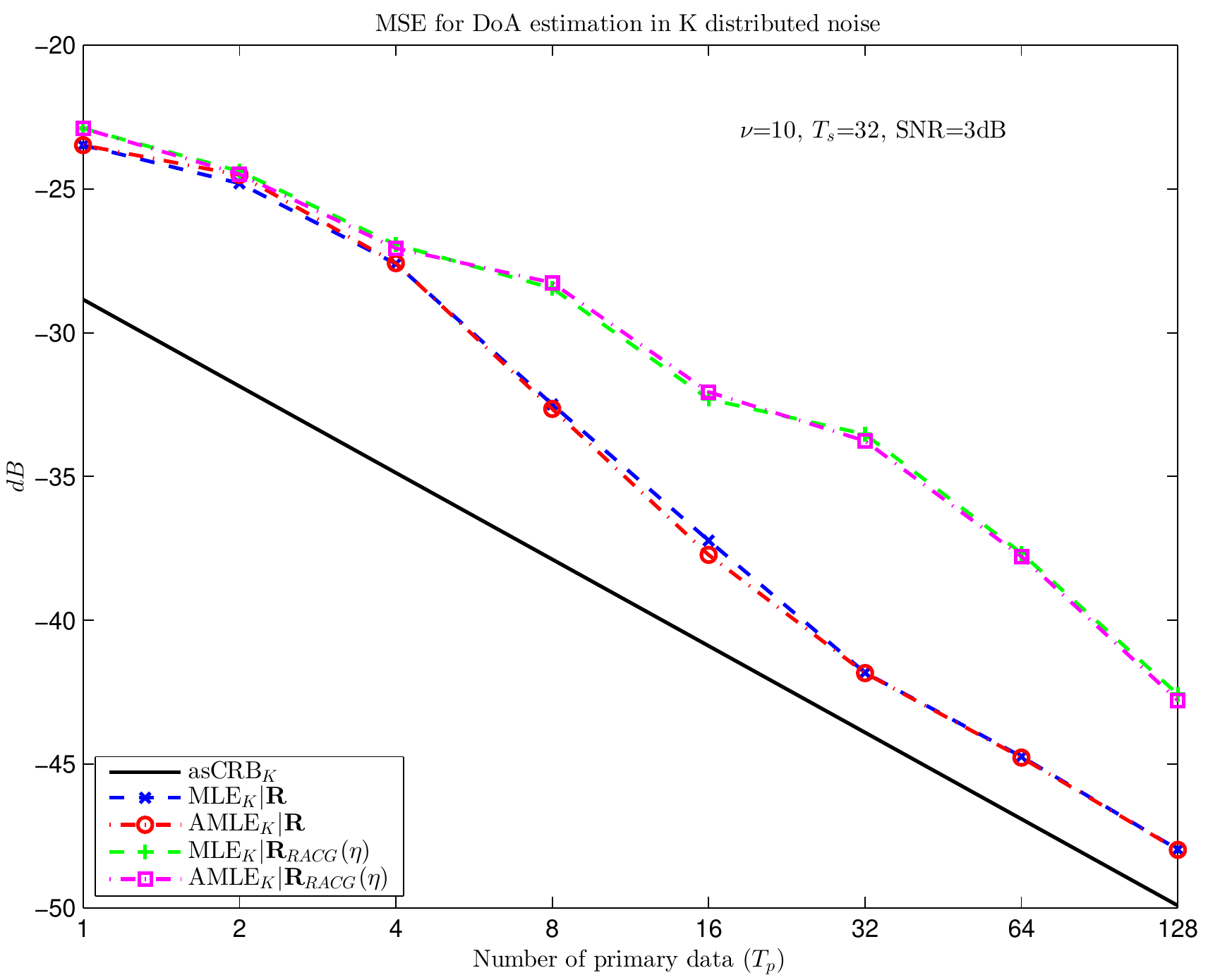}}
   \subfigure[][]{\label{fig:MSE_vs_T_Ts=32_nu=2_SNR=3}
  \includegraphics[width=7.5cm]{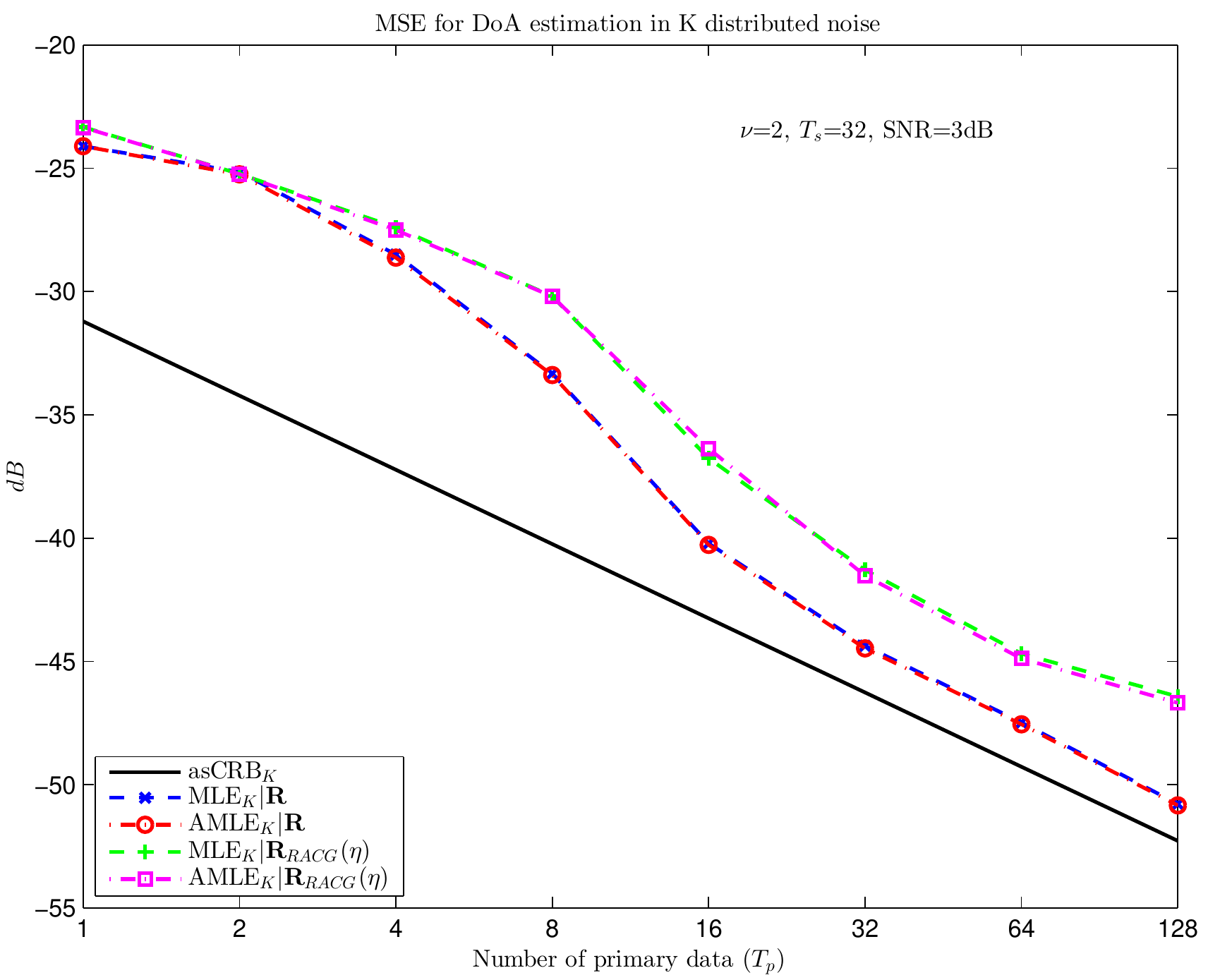}} \\  
    \subfigure[][]{\label{fig:MSE_vs_T_Ts=32_nu=1.5_SNR=3}
  \includegraphics[width=7.5cm]{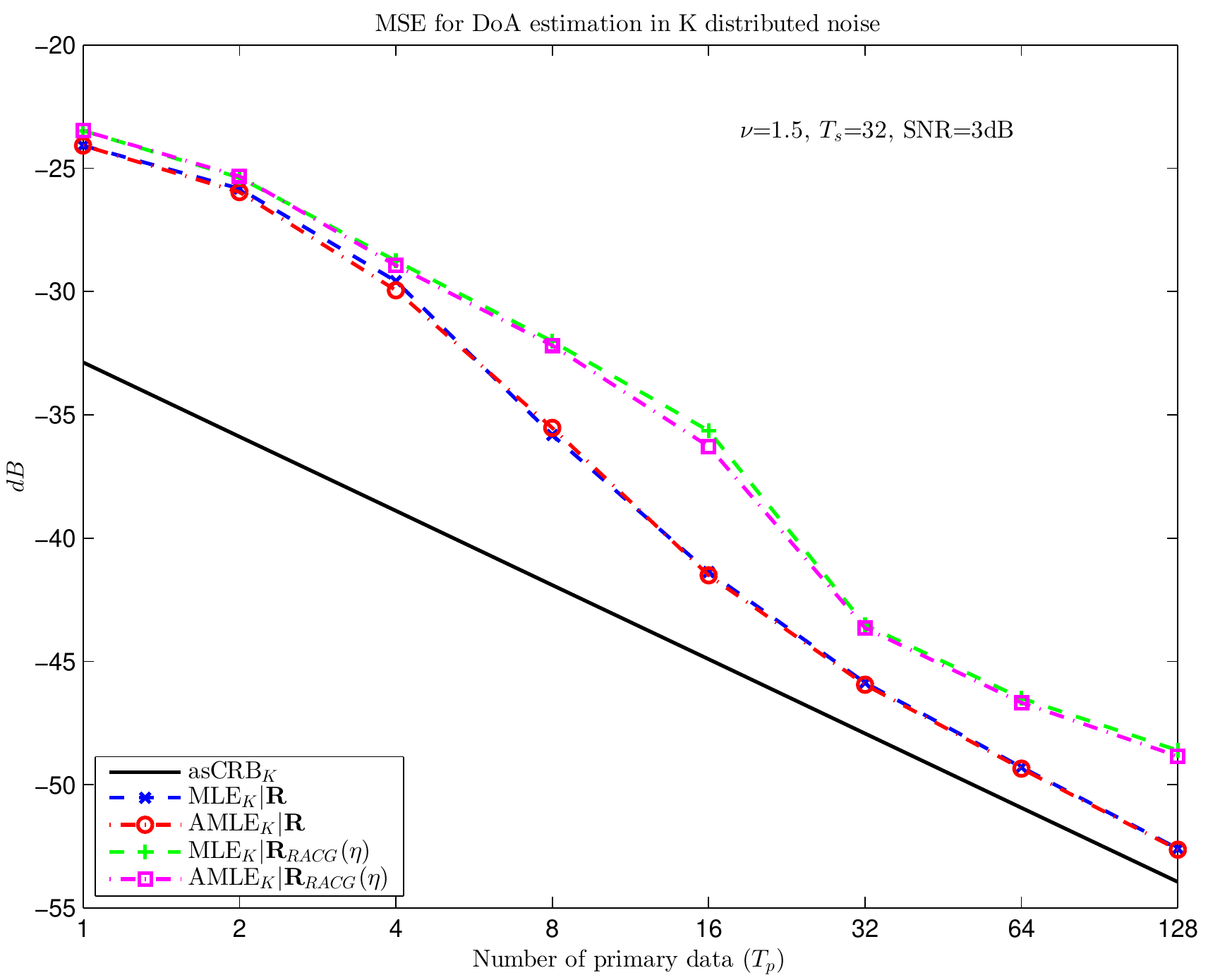}}
   \subfigure[][]{\label{fig:MSE_vs_T_Ts=32_nu=1.1_SNR=3}
  \includegraphics[width=7.5cm]{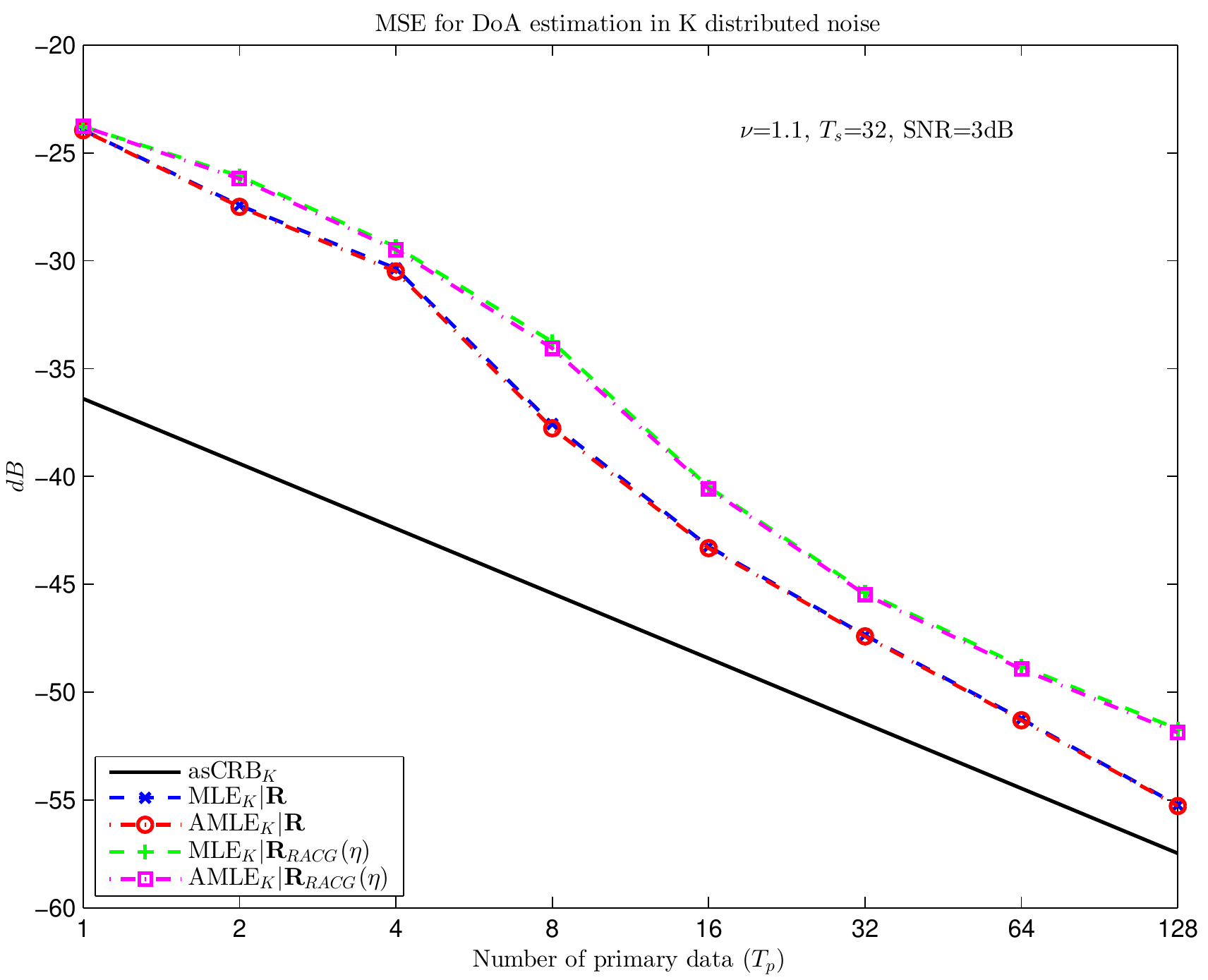}} \\ 
  \caption{Cram\'{e}r-Rao bounds and mean square error of  estimators versus $\Tp$ with either $\R$ known or estimated. $M=16$, $SNR=3$dB,  $\Ts=32$ and $\nu > 1$.}
  \label{fig:MSE_vs_T_Ts=32_nu>1}
\end{figure}

\begin{figure}[p]
 \centering
   \subfigure[][]{\label{fig:MSE_vs_T_Ts=32_nu=0.9_SNR=3}
  \includegraphics[width=7.5cm]{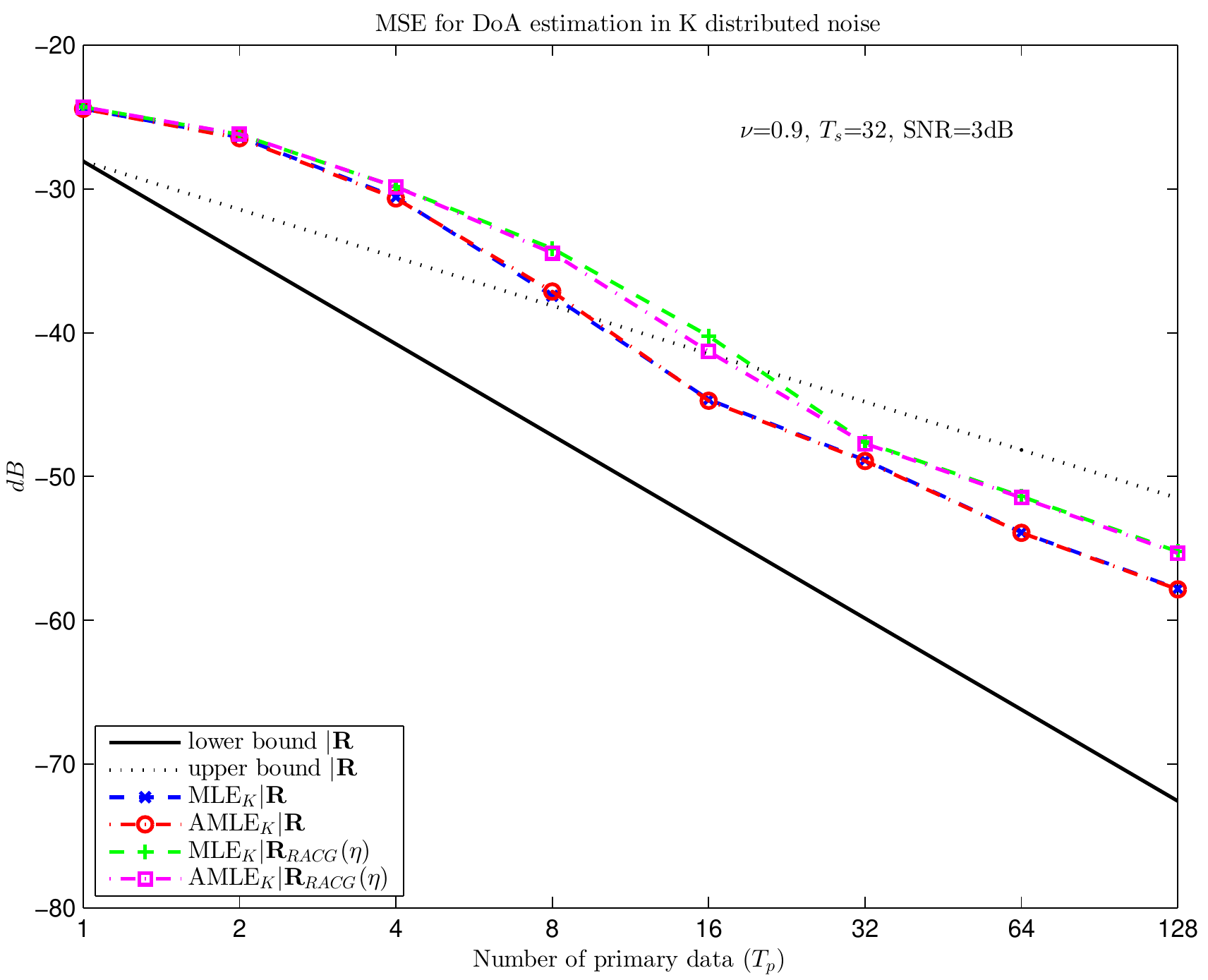}}
   \subfigure[][]{\label{fig:MSE_vs_T_Ts=32_nu=0.5_SNR=3}
  \includegraphics[width=7.5cm]{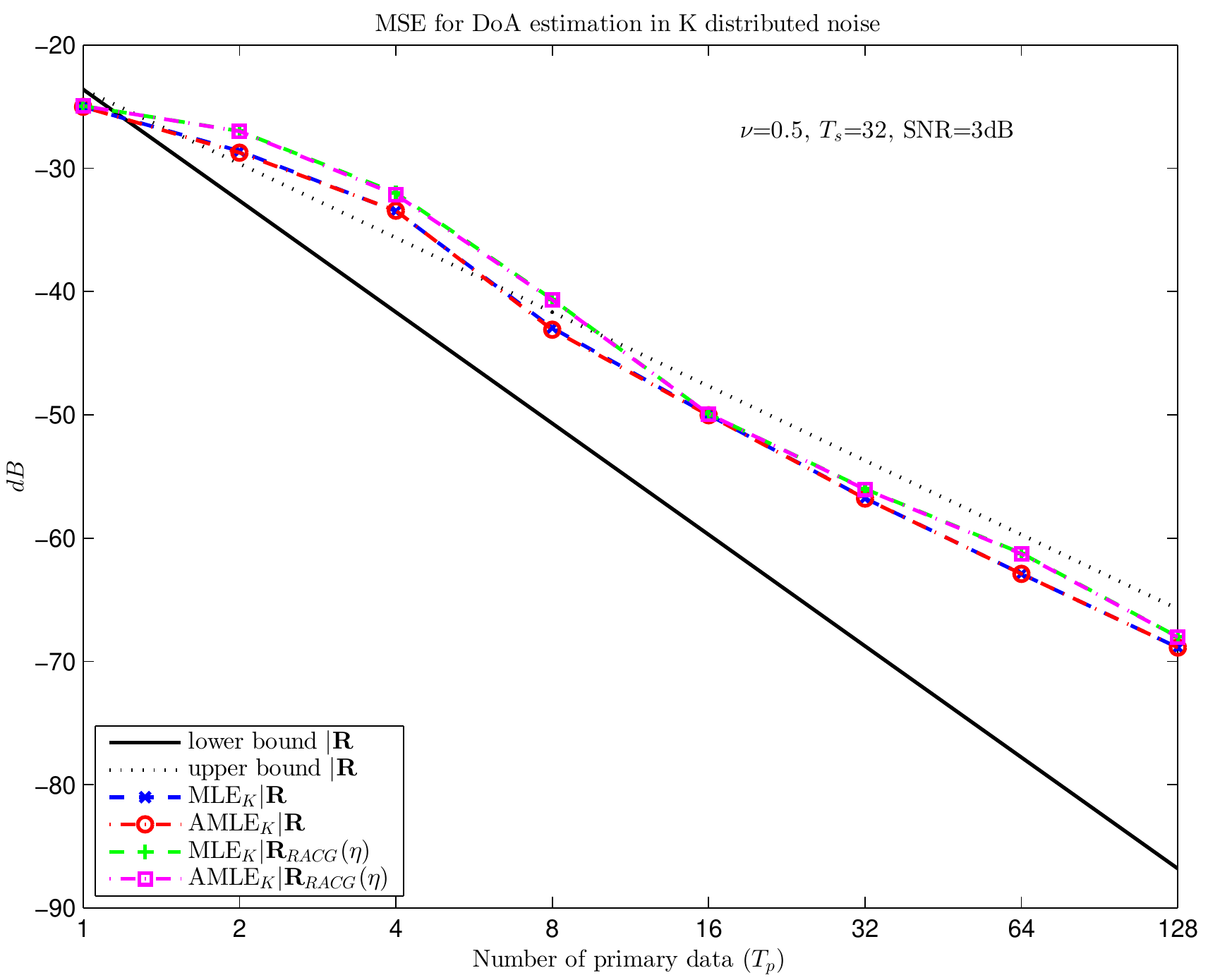}} \\
    \subfigure[][]{\label{fig:MSE_vs_T_Ts=32_nu=0.2_SNR=3}
  \includegraphics[width=7.5cm]{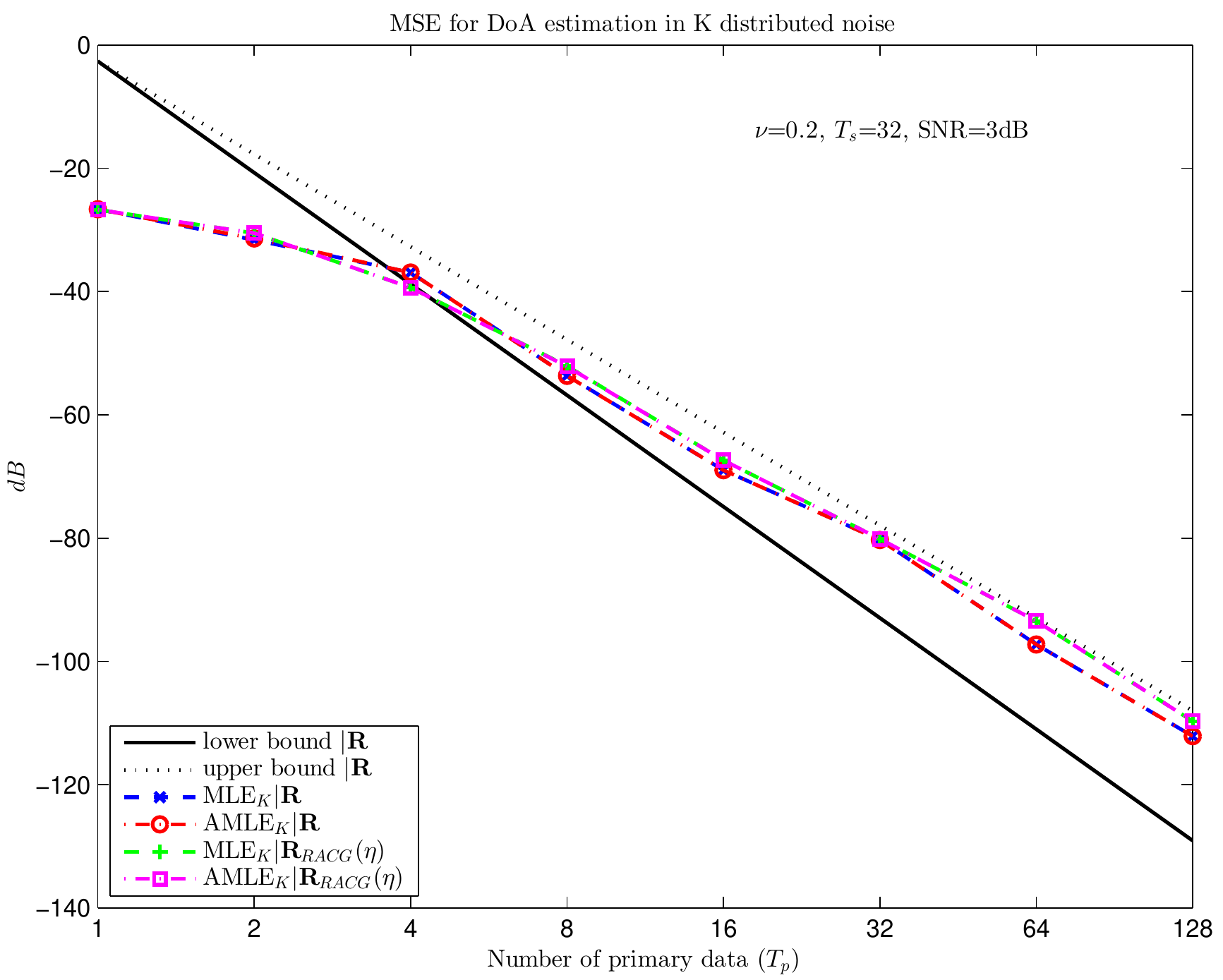}}
   \subfigure[][]{\label{fig:MSEvs_T_Ts=32_nu=0.1_SNR=3}
  \includegraphics[width=7.5cm]{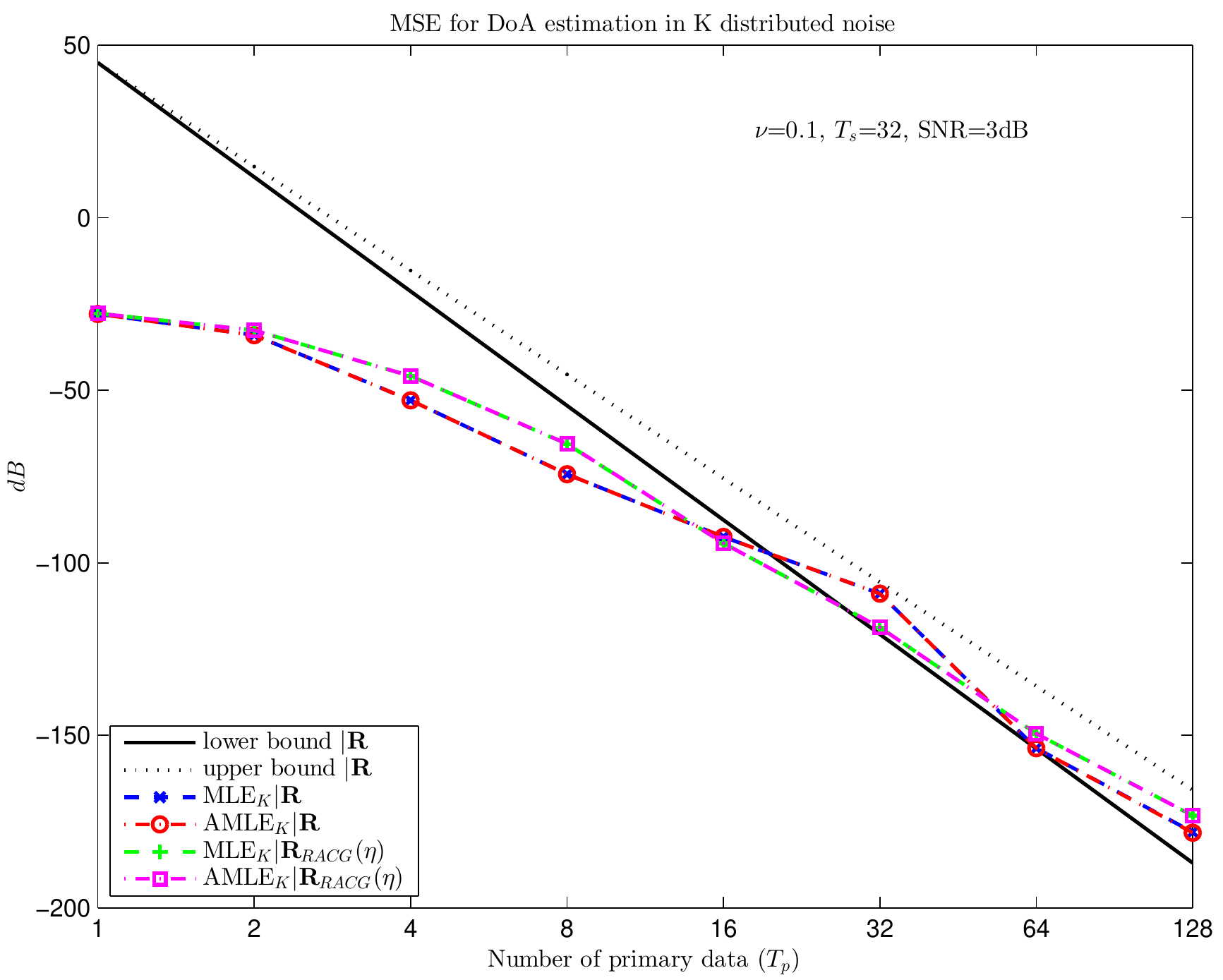}} \\  
  \caption{Mean square error of  estimators versus $\Tp$ with either $\R$ known or estimated. $M=16$, $SNR=3$dB, $\Ts=32$ and $\nu \leq 1$.}
   \label{fig:MSE_vs_T_Ts=32_nu<1}
\end{figure}

\begin{figure}[htb]
\centering
\includegraphics[width=8cm]{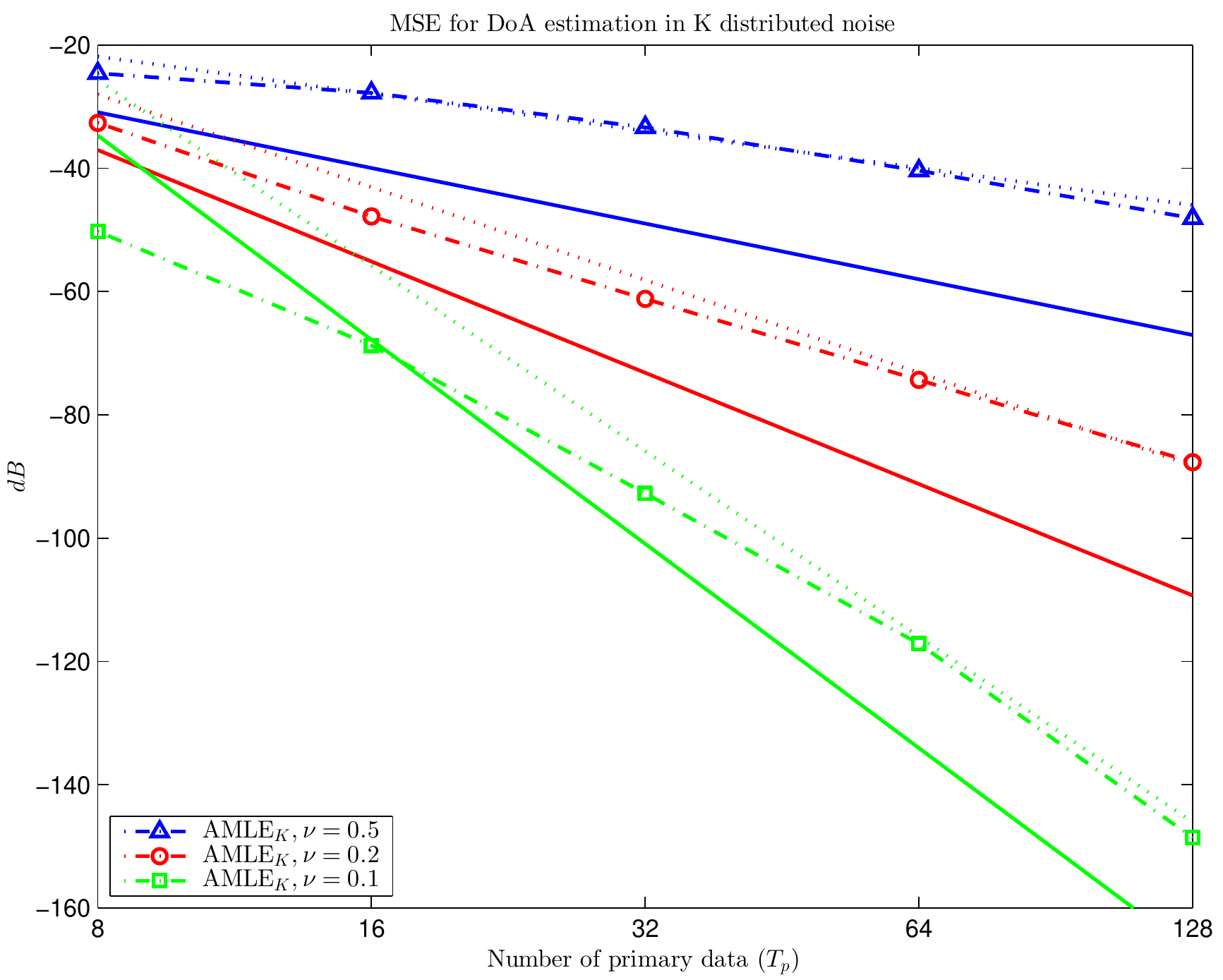}
\caption{Mean square error of  AML estimator in a two sources scenario with $\phi=10^{\circ},12^{\circ}$. $\R$ known,  $M=16$ and $SNR=3$dB.}
\label{fig:MSE_multi_Rknown_vs_T_vs_nu}
\end{figure}

\begin{figure}[p]
 \centering
  \subfigure[][]{\label{fig:MSE_vs_Ts_T=16_nu=10_SNR=3}
  \includegraphics[width=7.5cm]{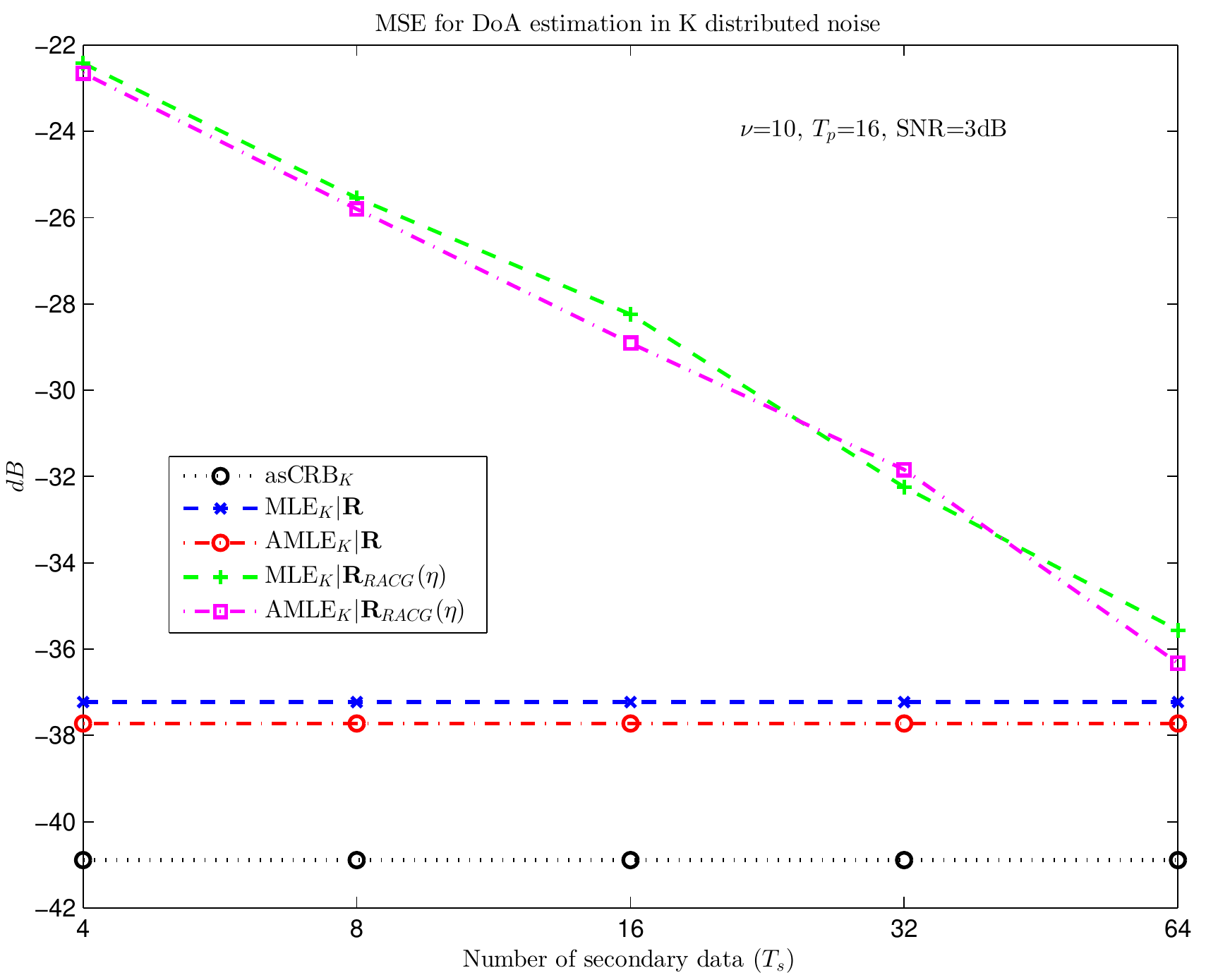}}
   \subfigure[][]{\label{fig:MSE_vs_Ts_T=16_nu=1.5_SNR=3}
  \includegraphics[width=7.5cm]{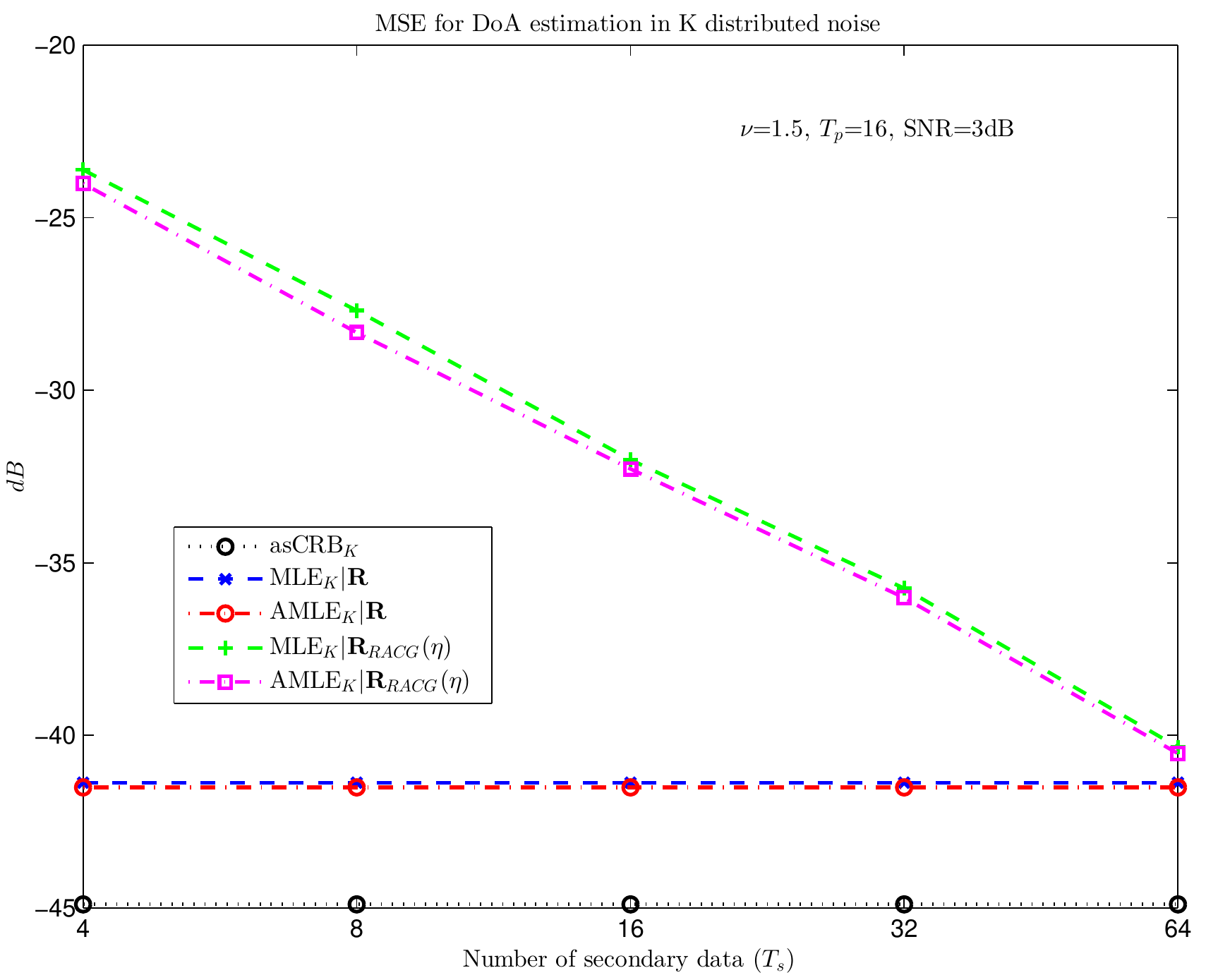}} \\  
    \subfigure[][]{\label{fig:MSE_vs_Ts_T=16_nu=0.9_SNR=3}
  \includegraphics[width=7.5cm]{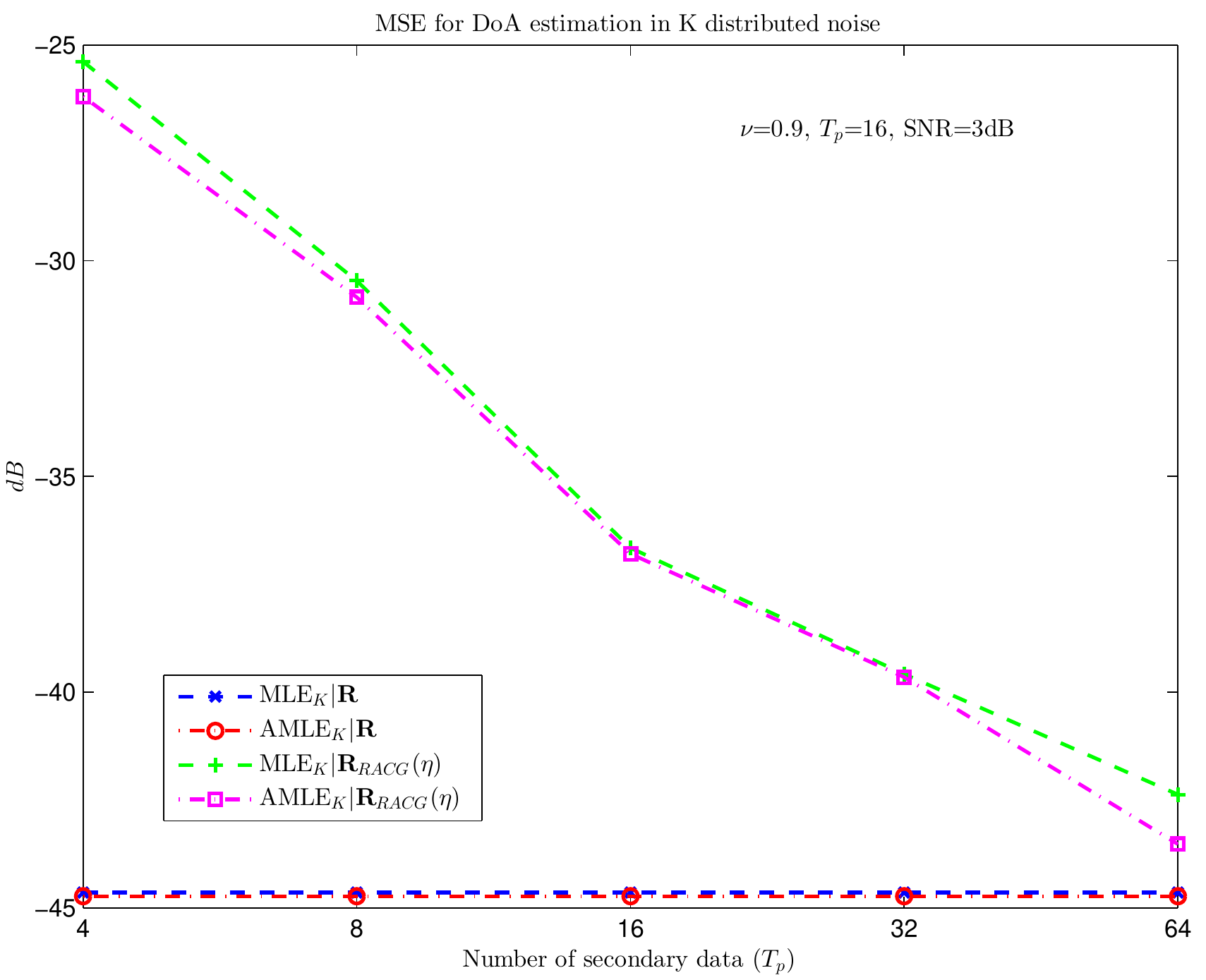}}
   \subfigure[][]{\label{fig:MSE_vs_Ts_T=16_nu=0.5_SNR=3}
  \includegraphics[width=7.5cm]{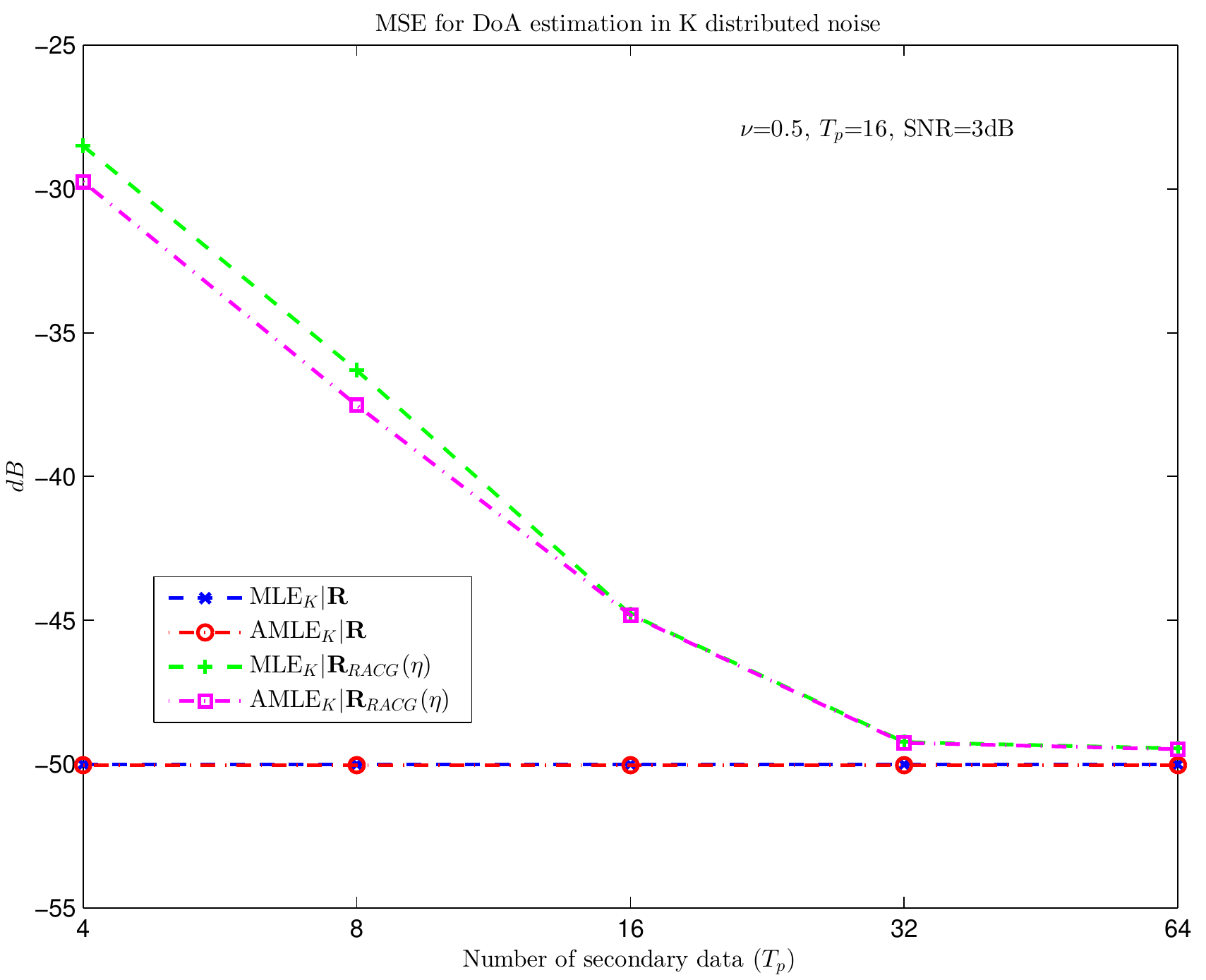}} \\ 
  \caption{Cram\'{e}r-Rao bounds and mean square error of  estimators versus $\Ts$. $M=16$, $SNR=3$dB, $\Tp=16$ and varying $\nu$.}
  \label{fig:MSE_vs_Ts_T=16}
\end{figure}

\begin{figure}[p]
 \centering
  \subfigure[][]{\label{fig:MSE_mixture_vs_T_alpha=0_SNR=3}
  \includegraphics[width=7.5cm]{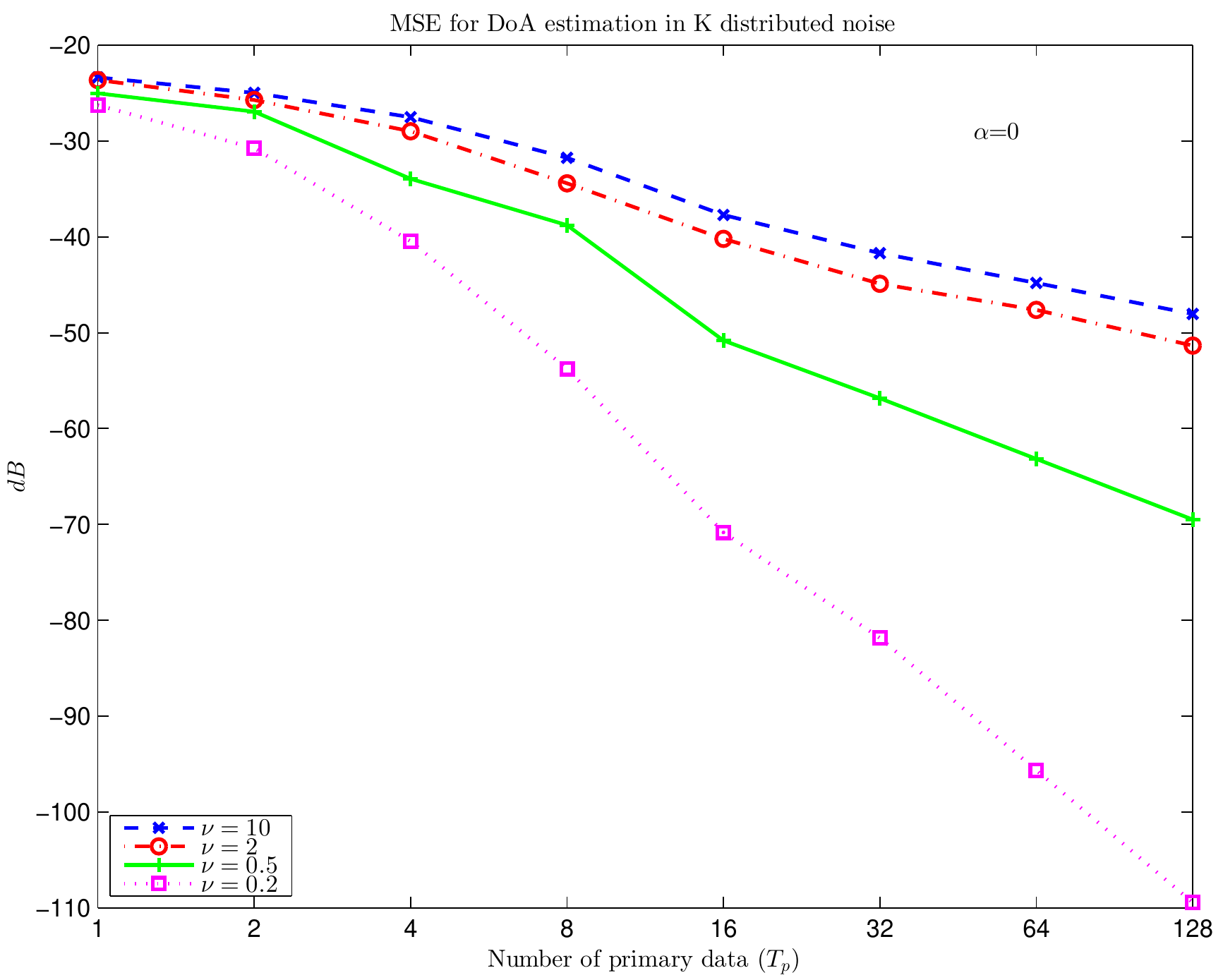}}
   \subfigure[][]{\label{fig:MSE_mixture_vs_T_alpha=0.01_SNR=3}
  \includegraphics[width=7.5cm]{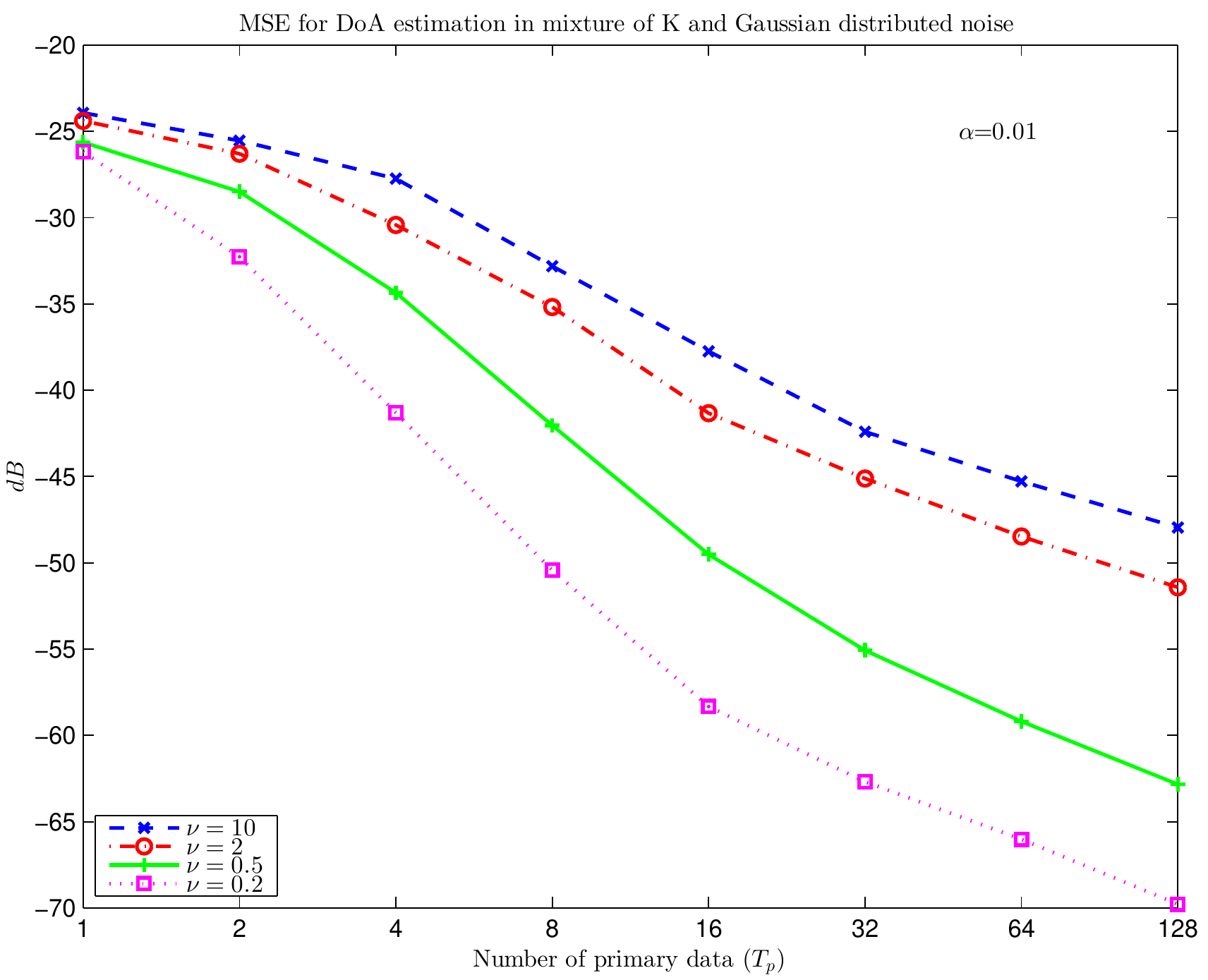}} \\  
    \subfigure[][]{\label{fig:MSE_mixture_vs_T_alpha=0.1_SNR=3}
  \includegraphics[width=7.5cm]{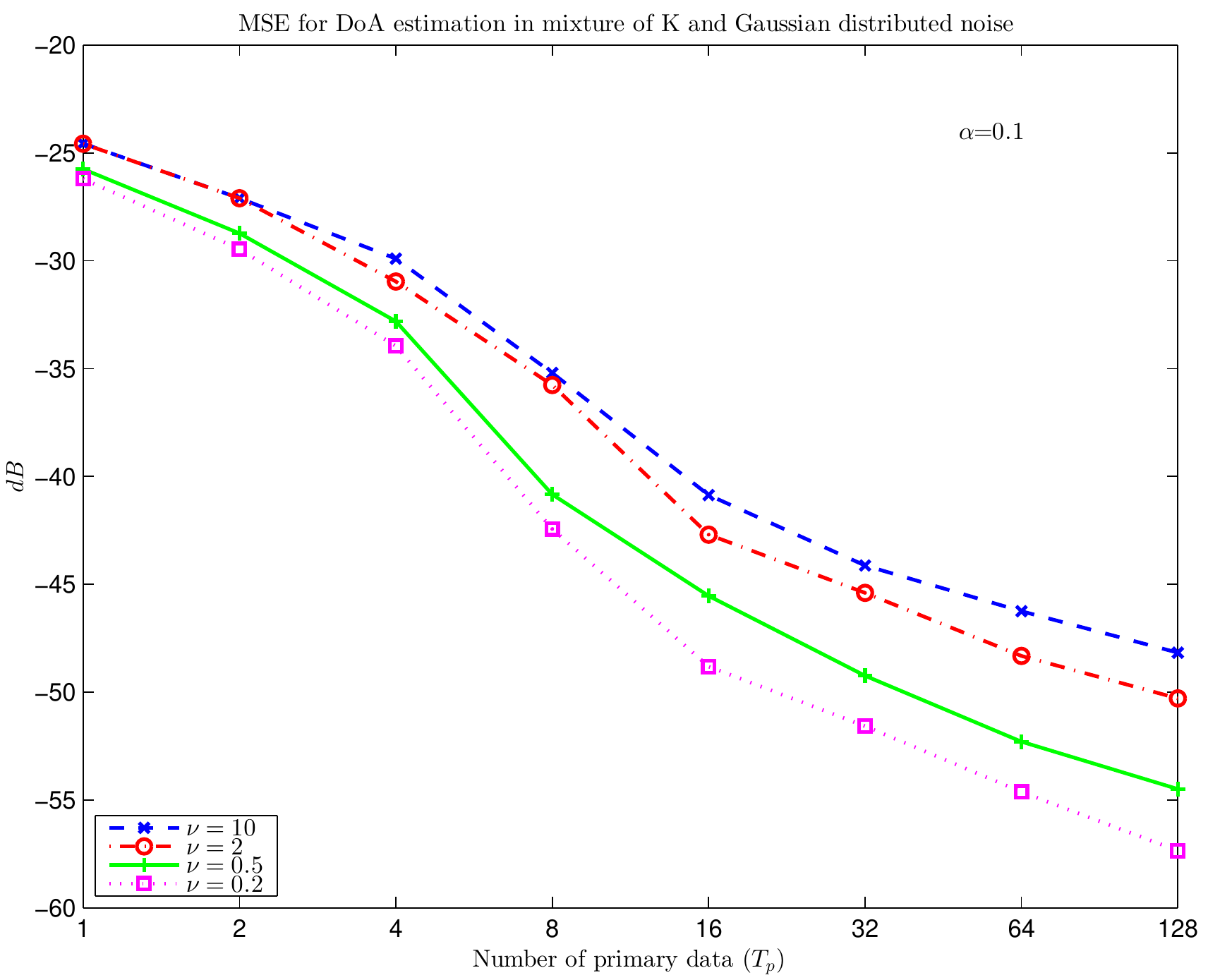}}
   \subfigure[][]{\label{fig:MSE_mixture_vs_T_alpha=0.2_SNR=3}
  \includegraphics[width=7.5cm]{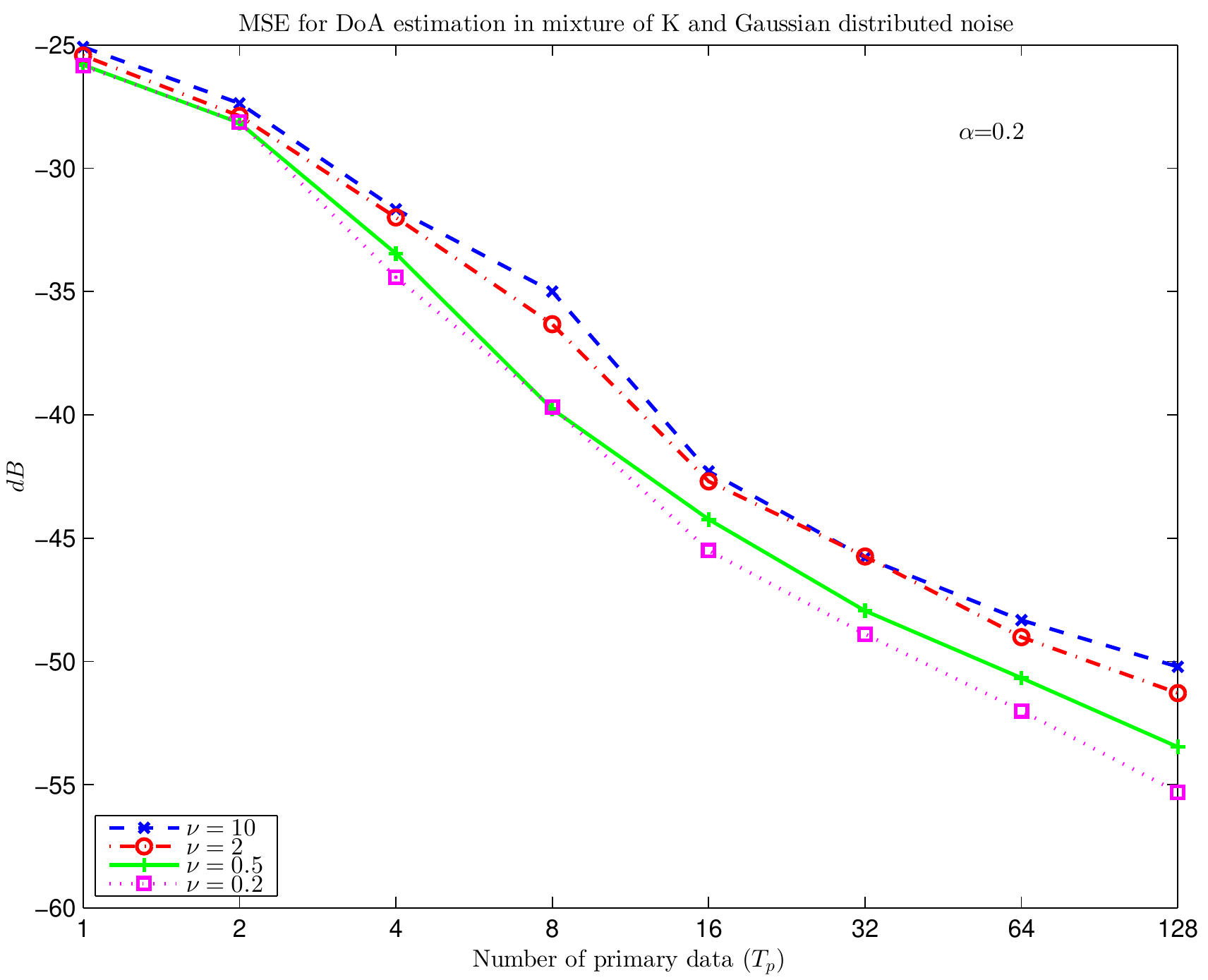}} \\ 
  \caption{Mean square error of AMLE versus $\Tp$ in the case of a mixture of $K$-distributed and Gaussian distributed noise. $M=16$, $SNR=3$dB,  and varying $\alpha$.}
  \label{fig:MSE_mixture_vs_T}
\end{figure}

\end{document}